\newcommand*\widefbox[1]{\fbox{\hspace{2em}#1\hspace{2em}}}
\numberwithin{equation}{section}
\newtheorem{theorem}{Theorem}[section]
\crefname{theorem}{Theorem}{Theorems}
\newaliascnt{lemma}{theorem}
\newtheorem{lemma}[lemma]{Lemma}
\crefname{lemma}{Lemma}{Lemmas}
\newaliascnt{proposition}{theorem}
\newtheorem{proposition}[proposition]{Proposition}
\crefname{proposition}{Proposition}{Propositions}
\newaliascnt{corollary}{theorem}
\crefname{corollary}{Corollary}{Corollaries}
\newaliascnt{fact}{theorem}
\crefname{fact}{Fact}{Facts}
\newaliascnt{definition}{theorem}
\newtheorem{definition}[definition]{Definition}
\crefname{definition}{Definition}{Definitions}
\newaliascnt{remark}{theorem}
\crefname{remark}{Remark}{Remarks}
\newaliascnt{conjecture}{theorem}
\crefname{conjecture}{Conjecture}{Conjectures}
\newaliascnt{example}{theorem}
\crefname{example}{Example}{Examples}
\newaliascnt{claim}{theorem}
\newtheorem{claim}[claim]{Claim}
\crefname{claim}{Claim}{Claims}
\newaliascnt{question}{theorem}
\crefname{question}{Question}{Questions}
\newaliascnt{exercise}{theorem}
\crefname{exercise}{Exercise}{Exercises}
\newaliascnt{notation}{theorem}
\crefname{notation}{Notation}{Notations}
\newaliascnt{problem}{theorem}
\crefname{problem}{Problem}{Problems}
\def\compactify{\itemsep=0pt \topsep=0pt \partopsep=0pt \parsep=0pt}
\newcommand{\abs}[1]{\left|#1\right|} 
\newcommand{\norm}[1]{\lVert#1\rVert}
\providecommand{\card}[1]{\lvert#1\rvert}
\DeclareMathOperator*{\EX}{\mathbb E}
\newcommand{\R}{\mathbb R}
\newcommand{\vect}[1]{\mathbf{#1}}
\newcommand{\minn}[1]{\min\{{#1}\}}
\providecommand{\eqdef}{:=}
\providecommand{\aset}[1]{\{ #1 \}}
\DeclareMathOperator{\fsp}{sp}
\newcommand{\cproblem}[1]{\mathsf{#1}}
\newcommand{\cclass}[1]{\mathbf{#1}}
\newcommand{\fcap}{\mathrm{cap}}
\newcommand{\fdem}{\mathrm{dem}}
\newcommand{\OPT}{\mathrm{OPT}}
\newcommand{\LP}{\mathrm{LP}}
\newcommand{\opt}{\mathrm{opt}}
\newcommand{\SDP}{\mathrm{SDP}}
\title{Cheeger-type approximation for sparsest $st$-cut%
}
\author{%
Robert Krauthgamer%
\thanks{Work supported in part by a US-Israel BSF grant \#2010418,
Israel Science Foundation grant \#897/13,
and by the Citi Foundation.
Email: \texttt{robert.krauthgamer@weizmann.ac.il, talw@mit.edu}}\\
Weizmann Institute of Science
\and
Tal Wagner\footnotemark[2]
\\
MIT
}
\begin{document}
\maketitle

\begin{abstract}
We introduce the $st$-cut version of the Sparsest-Cut problem, 
where the goal is to find a cut of minimum sparsity in a graph $G(V,E)$
among those separating two distinguished vertices $s,t\in V$.
Clearly, this problem is at least as hard as the usual (non-$st$) version.
Our main result is a polynomial-time algorithm for the product-demands setting,
that produces a cut of sparsity $O(\sqrt{\OPT})$, 
where $\OPT\leq1$ denotes the optimum
where the total edge capacity and the total demand are assumed (by normalization) 
to be $1$.

Our result generalizes the recent work of Trevisan [arXiv, 2013] for the 
non-$st$ version of the same problem (Sparsest-Cut with product demands), 
which in turn generalizes the bound achieved by the discrete Cheeger inequality,
a cornerstone of Spectral Graph Theory that has numerous applications.
Indeed, Cheeger's inequality handles graph conductance, the special case 
of product demands that are proportional to the vertex (capacitated) degrees.
Along the way, we obtain an $O(\log\card{V})$-approximation
for the general-demands setting of Sparsest $st$-Cut.
\end{abstract}

\section{Introduction}\label{section_intro}

The sparsest cut problem in graphs, defined below, 
is a fundamental optimization problem.
It is essentially equivalent to edge-expansion in graphs
and conductance in Markov chains,
and it is closely related to spectral graph theory
via a connection known as the discrete Cheeger inequality.
In terms of applications, this problem can be used as a building block 
for solving several other graph problems, 
and from a technical perspective, it is tied closely to geometric analysis, 
through the strong connection between its approximability 
to low-distortion metric embeddings. 
Given all these connections to many important problems, areas, and concepts, 
it is not surprising that sparsest cut was studied extensively.
Our focus here is on polynomial-time approximation algorithms 
for an $st$-variant of the sparsest cut problem, 
where the cut must separate two designated ``terminal'' vertices $s,t\in V$
(similarly to the minimum $st$-cut problem).

\paragraph{Sparsest cut.}

Let $G=(V,\fcap,\fdem)$ be a vertex set of size $n=|V|$
and two weight functions, called capacity and demand, 
each mapping unordered pairs of vertices to non-negative reals, 
formally, $\fcap,\fdem:\binom{V}{2}\to \R_{\ge0}$. 
It is sometimes convenient to think of $G$ as an undirected graph,
with the capacity function representing edge weights.
Denote by $C=\sum_{\aset{u,v}\subset V}\fcap(\aset{u,v})$ the total capacity,
and similarly by $D=\sum_{\aset{u,v}\subset V}\fdem(\aset{u,v})$ the total demand,
and assume both are positive. 
Define the \emph{sparsity} of the cut $(S,\bar S)$, for a subset
$\emptyset \subsetneq S \subsetneq V$,
as the ratio between the fraction of capacity separated by the cut 
and the fraction of separated demand, formally
\[ 
  \fsp_{G}(S,\bar S) 
  \eqdef \frac{\frac{1}{C}\sum_{u\in S,v\in\bar S}\fcap(\aset{u,v})}
              {\frac{1}{D}\sum_{u\in S,v\in\bar S}\fdem(\aset{u,v})} .
\]
By convention, if the denominator is zero,
e.g., in the trivial cases $S=\emptyset$ and $S=V$,
then $\fsp_{G}(S,\bar S)\eqdef\infty$.

Before proceeding, we introduce two assumptions that simplify the notation.
First, assume without loss of generality that $C=D=1$,
by simply scaling the capacities and demands.
Second, switch to a notation over \emph{ordered} pairs;
specifically, with slight abuse of notation define $\fcap: V\times V\to\R_{\ge0}$
where $\fcap(u,v)=\tfrac12 \fcap(\aset{u,v})$ for $u\neq v\in V$,
and $\fcap(v,v)=0$ for all $v\in V$;
define also $\fdem: V\times V\to\R_{\ge0}$ similarly.
Observe that under this new notation, we again have
$\sum_{u,v\in V} \fcap(u,v) = C = 1$ and
$\sum_{u,v\in V} \fdem(u,v) = D = 1$.
Overall, we obtain the more convenient form
\[ 
  \fsp_{G}(S,\bar S) 
  = \frac{2 \sum_{u\in S,v\in\bar S} \fcap(u,v)}
         {2 \sum_{u\in S,v\in\bar S} \fdem(u,v)} .
\]

In the general-demands sparsest cut problem, 
denoted henceforth $\cproblem{SparsestCut}$, 
the input is $G$ as above and the goal is to output a cut of minimum sparsity. 
An important restricted setting is that of product demands, 
where $\fdem(u,v)=\mu(u)\cdot\mu(v)$ 
for some probability distribution $\mu$ over the vertices,
and we denote this problem by $\cproblem{ProductSparsestCut}$.

\paragraph{Cheeger-type approximation.}

The well-known concept of conductance (of a graph with capacities on its edges) 
is just a special case of product demands, 
where $\mu$ is the stationary distribution of a random walk in $G$, 
that is, $\mu(v)$ is proportional to the capacitated degree of $v$, 
defined as $\mathrm{deg}(v)\eqdef \sum_{u\in V}\fcap(u,v)$. 
In this case, 
the discrete Cheeger inequality \cite{AM85, JS88, Mihail89}
efficiently produces a cut with sparsity at most $\sqrt{8\cdot \OPT}$, 
where $\OPT\leq 1$ is the sparsity of the optimal cut,%
\footnote{The normalization $C=D=1$ implies that $\OPT\leq 1$,
even in the case of general demands.
Indeed, consider the cuts $(\aset{v},V\setminus{v})$ for all $v\in V$;
the total capacity of all these cuts is $2C$,
and the total demand of all these cuts is $2D$,
hence by averaging, one of these cuts must have sparsity at most $1$.
}
see \cite{Chung97,Spielman12} for recent presentations.
This result has far-reaching theoretical implications,
e.g., for the construction of expander graphs, 
and variants of it are widely used in practice for graph partitioning tasks,
see e.g.~\cite{ShiMalik00}. 

As an extension, Trevisan \cite{Trevisan13} designed 
for more general setting of product demands,
a polynomial-time algorithm that finds a cut of sparsity $O(\sqrt{\OPT})$,
i.e., an $O(1/\sqrt{\OPT})$-factor approximation 
for $\cproblem{ProductSparsestCut}$.
His algorithm uses semidefinite programming, 
compared with a single eigenvector used in Cheeger's inequality.
Following Trevisan's terminology, 
we call such a guarantee a \emph{Cheeger-type} approximation.

\paragraph{Multiplicative approximation.}
These fundamental problems have attracted also extensive efforts 
to design polynomial-time algorithms 
with approximation factor bounded in terms of $n$.
For $\cproblem{ProductSparsestCut}$, 
a celebrated result of Arora, Rao and Vazirani \cite{ARV09}
achieves an $O(\sqrt{\log n})$-approximation.
For the more general problem $\cproblem{SparsestCut}$,
the best approximation factor known is $O(\sqrt{\log n}\log\log n)$, 
due to Arora, Lee and Naor \cite{ALN08}.
For important earlier results, see also \cite{LR99,AR98,LLR95}.

\paragraph{Results.}
We study a (new) variant of the sparsest cut problem 
concerned with cuts $(S,\bar S$) that are \emph{$st$-separating}, 
which means that $S$ contains exactly one of the vertices $s,t\in V$.
Formally, in the $st$-$\cproblem{SparsestCut}$ problem,
the input is $G=(V,\fcap,\fdem)$ as above 
together with two designated ``terminals'' $s,t\in V$, 
and the goal is to output a minimum-sparsity $st$-separating cut. 
The $st$-$\cproblem{ProductSparsestCut}$ problem is defined similarly 
in the product-demands setting. 

Our main result is an (efficient) Cheeger-type approximation 
for $st$-$\cproblem{ProductSparsestCut}$.
Along the way, 
we also obtain an $O(\log n)$-approximation for $st$-$\cproblem{SparsestCut}$.
These two results, stated formally in \cref{thm_cheeger,thm_lr},
can be viewed as extensions of \cite{Trevisan13,LLR95,AR98} 
to the $st$-setting. 
Observe that these two problems are at least as hard as their 
non-$st$ counterparts (for polynomial-time algorithms), 
because an algorithm for the former problems 
can be used to solve the latter ones with just a linear overhead, 
by fixing an arbitray $s\in V$ and trying all $t\in V$ exhaustively.

Technically, our algorithms are based on $\ell_1$-embeddings 
of certain finite metrics imposed on the vertex set,
which in turn are computed efficiently by linear and semidefinite relaxations. 
Compared to previous work, 
our distance functions have an additional property of $st$-separation, 
and our main challenge is to refine the known $\ell_1$-embedding techniques 
to ensure a separation between $s,t$.

\medskip
We additionally provide in \Cref{app:iterative} 
an $O(\sqrt{\log n})$-approximation for $st$-$\cproblem{ProductSparsestCut}$. This algorithm employs a completely different, divide-and-conquer approach,
and may be viewed as a reduction of the problem to its non-$st$ version. 
This approach does not immediately extend to a Cheeger-type approximation, because it requires an approximation factor that is a function of $n$,
and not input-dependent, as explained in \Cref{app:iterative}.

\paragraph{Related Work.}

Improved approximation bounds are known for 
$\cproblem{SparsestCut}$ and $\cproblem{ProductSparsestCut}$ 
in some special graph families, 
e.g., in bounded-treewidth graphs \cite{CKR10,GTW13,LS13}
and in planar grahps \cite{KPR93,FT03}, respectively.
See \cite{GTW13} for additional references.

On the other hand, approximating $\cproblem{SparsestCut}$ 
within a factor smaller than $17/16$ is $\cclass{NP}$-hard \cite{GTW13} 
(see \cite{MatulaS90,ChuzhoyK09,CKR10} for earlier results).
Stronger assumptions, like the unique games conjecture, 
can be used to exclude approximation within larger factors 
\cite{CKKRS06,KV05,GTW13}.
Trevisan~\cite{Trevisan13} further shows that computing a Cheeger-type 
approximation for general $\cproblem{SparsestCut}$ is Unique-Games-hard.

It is known $\cproblem{ProductSparsestCut}$ is NP-hard \cite{MatulaS90},
however all inapproximability results for this problem 
rely on stronger assumptions \cite{AMS11,RST12}.

\medskip
Apart from being a combinatorially natural problem, 
$st$-$\cproblem{SparsestCut}$ is closely related 
to popular image segmentation algorithms.
For instance, Normalized Cut \cite{ShiMalik00} is a variant of 
the graph conductance case of $\cproblem{SparsestCut}$ \cite{MajiVM11}, 
the same setting in which the discrete Cheeger inequality arises. 
For the 
application to image segmentation it is often needed to specify two predefined
points that have to be separated by the cut. This idea was used by
\cite{WuLeahy93} and later by \cite{BoykovJolly01} to reduce image
segmentation to the Minimum $st$-Cut problem, which is efficiently solvable.
However, it was noted already in \cite{WuLeahy93} that the resulting algorithm
tends to cut off isolated nodes. This motivated the introduction of normalized (or
sparse) cuts in \cite{ShiMalik00}, despite rendering the optimization
problem computationally hard. 
Followup work \cite{YuShi2004,ErikssonOK11,MajiVM11,ChewCahill15}
has attempted to encode various separation (and grouping) constraints 
into tractable relaxations of the problem, 
whose performance was then evaluated empirically. 
Our work can be viewed as a theoretical counterpart of this line of work, 
as we provide rigorous bounds for the case of $st$-separation.

\section{Basic machinery for $st$-cuts}
\label{sec:basic}
In this section we present some basic claims to reason about sparse $st$-cuts. All proofs are deferred to \Cref{sec:basic_proofs}, 
as they are simple adaptations of known arguments.

\subsection{Sparse $st$-cuts via $\ell_1$-embeddings}
\label{sec:L1embed}

We say that a cut $(S,\bar S)$ is \emph{$st$-separating}
if $S$ contains exactly one of the two vertices $s,t\in V$.
The standard approach to approximating $\cproblem{SparsestCut}$ is via embedding the vertices into $\ell_1$. 
The next lemma reproduces this argument with an additional condition 
that ensures that the produced cut is $st$-separating.

\begin{definition}
A map $f:V\rightarrow\R$ is said to be \emph{$st$-sandwiching} if $f(s) \leq f(v) \leq f(t)$ for all $v\in V$. A map $f:V\rightarrow\R^p$ is said to be \emph{$st$-sandwiching} if each of its coordinates is.
\end{definition}

\begin{lemma}\label{lmm_l1}
Let $G(V,\fcap,\fdem)$ be a $\cproblem{SparsestCut}$ instance, and let $f:V\to\R^m$. There exists a cut $(S,\bar S)$ such that
\[ \fsp_G(S,\bar S) \leq 
\frac{\sum_{u,v\in V}\fcap(u,v)\norm{f(u)-f(v)}_1}{\sum_{u,v\in V}\fdem(u,v)\norm{f(u)-f(v)}_1}, \]
and given $f$, this cut $(S,\bar S)$ is efficiently computable. 
Furthermore, if $f$ is $st$-sandwiching, then the cut is $st$-separating.
\end{lemma}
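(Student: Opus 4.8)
The plan is to reduce the general case to the one-dimensional case by a standard averaging argument, and then handle one dimension via a threshold-cut (sweep-cut) argument. First I would observe that the $\ell_1$-norm in $\R^m$ decomposes coordinatewise: writing $f = (f_1,\dots,f_m)$ with each $f_i : V \to \R$, we have $\norm{f(u)-f(v)}_1 = \sum_{i=1}^m \abs{f_i(u)-f_i(v)}$. Hence both the numerator and denominator of the target ratio split as sums over $i$ of the corresponding one-dimensional quantities $N_i \eqdef \sum_{u,v}\fcap(u,v)\abs{f_i(u)-f_i(v)}$ and $D_i \eqdef \sum_{u,v}\fdem(u,v)\abs{f_i(u)-f_i(v)}$. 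By the mediant inequality (the ratio $\sum_i N_i / \sum_i D_i$ is a weighted average of the ratios $N_i/D_i$, so it is at least the minimum such ratio), there is a coordinate $i^\star$ with $N_{i^\star}/D_{i^\star} \le \bigl(\sum_i N_i\bigr)/\bigl(\sum_i D_i\bigr)$; if some $D_i = 0$ then $N_i/D_i = \infty$ by convention and that coordinate is simply discarded (and if all $D_i=0$ the right-hand side of the lemma is $\infty$ and there is nothing to prove). This step reduces everything to a single coordinate, which I rename $g \eqdef f_{i^\star} : V \to \R$.

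Next I would handle the one-dimensional case. The key identity is that for $g : V \to \R$, the ``cut measure'' $\abs{g(u)-g(v)}$ equals the integral over thresholds $\theta \in \R$ of the indicator that the threshold cut $S_\theta \eqdef \{v : g(v) \le \theta\}$ separates $u$ and $v$; that is, $\abs{g(u)-g(v)} = \int_{-\infty}^{\infty} \1[\text{exactly one of } u,v \text{ lies in } S_\theta]\, d\theta$. Substituting this into both $N_{i^\star}$ and $D_{i^\star}$, swapping the (finite) sums with the integral, and recognizing $\sum_{u \in S_\theta, v \notin S_\theta}\fcap(u,v)$ and the analogous demand sum, I get that $N_{i^\star} = \int \bigl(\text{capacity cut by }S_\theta\bigr) d\theta$ and $D_{i^\star} = \int \bigl(\text{demand cut by }S_\theta\bigr) d\theta$. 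Again by the mediant/averaging inequality applied to the integral, some threshold $\theta^\star$ gives $\fsp_G(S_{\theta^\star},\overline{S_{\theta^\star}}) \le N_{i^\star}/D_{i^\star}$, chaining back to the bound in the statement. Efficiency is clear because $g$ takes at most $n$ distinct values, so only $n-1$ threshold cuts (plus the trivial ones, which we ignore since their sparsity is $\infty$) need be examined, and the best among them achieves the bound; this choice does not increase the ratio because at least one of the $S_\theta$ appearing in the integral decomposition is a nontrivial threshold cut whenever $D_{i^\star} > 0$.

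Finally, the $st$-separation claim follows almost immediately. If $f$ is $st$-sandwiching, then in particular the chosen coordinate satisfies $g(s) \le g(v) \le g(t)$ for all $v$, and also $g(s) < g(t)$ whenever $D_{i^\star} > 0$ (if $g(s) = g(t)$ then $g$ is constant and $D_{i^\star} = 0$). For any threshold $\theta$ with $g(s) \le \theta < g(t)$ we have $s \in S_\theta$ and $t \notin S_\theta$, so $S_\theta$ is $st$-separating; and the thresholds outside this range give trivial cuts (everything or nothing on one side), contributing $\infty$ to the sparsity, so the minimizing threshold cut is automatically $st$-separating. The main obstacle, though a mild one, is bookkeeping around the degenerate cases ($D_i = 0$, constant coordinates, trivial cuts) to make sure the averaging arguments are applied only where the denominators are positive and that the $\infty$ convention is invoked consistently; the analytic content is entirely standard.
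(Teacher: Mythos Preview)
Your proposal is correct and follows essentially the same approach as the paper's proof: reduce to a single coordinate via the mediant inequality on the $\ell_1$ decomposition, then use a threshold/sweep argument on the line (the paper phrases it as a uniformly random threshold $\tau\in(f^{\min},f^{\max})$, which is just your integral identity normalized). The only cosmetic difference is the order of presentation --- the paper treats $m=1$ first and then lifts to general $m$, whereas you reduce first --- and the paper's $st$-separation argument is the one-line observation that $f(s)=f^{\min}$ and $f(t)=f^{\max}$, which is equivalent to yours.
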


\subsection{$st$-separating semi-metrics}
\label{sec:stsep}

We now introduce semi-metrics with an additional $st$-separating property, and prove some of their useful properties. Recall that a map $d:V\times V\to\R_{\geq0}$ is called a \emph{semi-metric} if it is symmetric and satisfies the triangle inequality. The $st$-separation property we employ requires that the triangle inequality 
from $s$ to $t$ via any third point actually holds as equality.


\begin{definition} \label{defn:sep}
Let $s,t\in V$. A semi-metric $d:V\times V\to\R_{\geq0}$ is \emph{$st$-separating} if 
\[
   \forall v\in V, \qquad  d(s,t)=d(s,v)+d(v,t). 
\]
\end{definition}

As the next lemma shows, this property immediately implies that the pair $s,t$ 
attains the diameter of $V$, i.e., the maximum distance between any two points.

\begin{proposition} \label{prop:stsep}
Let $d$ be an $st$-separating semi-metric on $V$. 
Then $s,t\in V$ attain the diameter of $V$, i.e., 
\[
  \forall u,v\in V, \qquad d(u,v) \leq d(s,t).
\]
\end{proposition}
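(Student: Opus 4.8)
The plan is to use the $st$-separating property twice, once to pin down $d(s,v)$ and once to pin down $d(v,t)$, and then combine with the triangle inequality. Specifically, fix arbitrary $u,v\in V$. By \Cref{defn:sep} applied to the point $u$ we have $d(s,t)=d(s,u)+d(u,t)$, and similarly applied to $v$ we have $d(s,t)=d(s,v)+d(v,t)$. In particular each of $d(s,u)$, $d(u,t)$, $d(s,v)$, $d(v,t)$ is at most $d(s,t)$, since the other summand is non-negative.

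Next I would bound $d(u,v)$ by routing through one of the terminals. The triangle inequality gives $d(u,v)\leq d(u,s)+d(s,v)$. This alone is not enough, so the trick is to add this to the analogous inequality through $t$, namely $d(u,v)\leq d(u,t)+d(t,v)$. Summing the two yields
\[
  2\,d(u,v) \leq \bigl(d(s,u)+d(u,t)\bigr) + \bigl(d(s,v)+d(v,t)\bigr) = d(s,t) + d(s,t) = 2\,d(s,t),
\]
where the middle equality is exactly the two instances of the $st$-separating identity. Dividing by $2$ gives $d(u,v)\leq d(s,t)$, which is the claim.

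I do not anticipate a genuine obstacle here — the statement is a short consequence of the definition. The only thing to be slightly careful about is that the argument must work for \emph{all} $u,v$, including the degenerate cases where $u$ or $v$ equals $s$ or $t$; but in those cases one of the two triangle inequalities becomes an equality $d(u,v)=d(u,v)$ or reduces to the defining identity directly, so nothing breaks. An alternative one-line proof routes only through $s$: $d(u,v)\leq d(u,s)+d(s,v)=d(u,s)+\bigl(d(s,t)-d(t,v)\bigr)\le d(s,t)$ using $d(u,s)\le d(t,v)$? — no, that last inequality need not hold, so the symmetric two-inequality averaging above is the clean route and is the one I would write up.
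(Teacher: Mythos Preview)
Your proof is correct and is essentially identical to the paper's: both apply the triangle inequality through $s$ and through $t$, sum the two bounds, and collapse the right-hand side using the $st$-separating identity for $u$ and for $v$. The extra remarks about degenerate cases and the abandoned one-sided route are fine but unnecessary for the write-up.
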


\subsection{Fr\'{e}chet embeddings}
A useful way to embed a general distance function into $\R$, 
called a Fr\'{e}chet embedding, 
is to map each point to its distance from some fixed subset $A\subseteq V$.
This simple notion is an important ingredient in many algorithms for 
$\cproblem{SparsestCut}$, including \cite{LLR95,AR98,ARV09,Trevisan13}.

\begin{definition}[Distance to a subset]
\label{fdef}
Let $d$ be a semi-metric on $V$, and let $A$ be a non-empty subset of $V$. 
The distance between a point $v\in V$ and $A$ is defined as 
$d(v,A) \eqdef \min_{a\in A}d(v,a)$.
\end{definition}

The next lemma is well-known and straightforward; its proof is omitted.
\begin{lemma}[Triangle inequality]
\label{lmm_triangle}
For every $u,v\in V$ and $A\subseteq V$,
$ d(v,A)\leq d(v,u)+d(u,A) $.
\end{lemma}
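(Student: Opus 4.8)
The plan is to unwind both occurrences of the point-to-set distance via \Cref{fdef} and reduce everything to the ordinary triangle inequality for the semi-metric $d$. Fix $u,v\in V$ and a non-empty $A\subseteq V$. First I would choose a witness for the right-hand term: since $A$ is non-empty and $V$ is finite, the minimum defining $d(u,A)$ is attained, so pick $a^*\in A$ with $d(u,a^*)=d(u,A)$.

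Next I would apply the triangle inequality of the semi-metric $d$ to the three points $v,u,a^*$, which gives $d(v,a^*)\le d(v,u)+d(u,a^*)=d(v,u)+d(u,A)$. Finally, bounding the left-hand side: by definition $d(v,A)=\min_{a\in A}d(v,a)\le d(v,a^*)$, since $a^*\in A$. Chaining the two inequalities yields $d(v,A)\le d(v,u)+d(u,A)$, as claimed.

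There is essentially no obstacle here — the statement is a one-line consequence of the semi-metric triangle inequality applied after choosing the optimal point $a^*\in A$ for $u$; the only point worth a moment's care is that one must use $a^*$ (the minimizer for $u$), not the minimizer for $v$, and that non-emptiness of $A$ guarantees this minimizer exists. Symmetry of $d$ is used implicitly so that $d(v,u)$ and $d(u,v)$ coincide, matching the statement's orientation.
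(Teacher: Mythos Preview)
Your proof is correct and is exactly the standard argument; the paper itself omits the proof, calling the lemma ``well-known and straightforward,'' so there is nothing to compare against beyond noting that your write-up supplies precisely the routine verification the authors skipped.
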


To preserve the $st$-separation property, 
we introduce the following variants of a Fr\'{e}chet embedding. 
They will be used in \Cref{sec:lr} to obtain an $O(\log n)$-approximation
(similarly to \cite{LLR95}), 
and then in the ``easy'' case of a Cheeger-type approximation 
in \Cref{sec:Cheeger} (similarly to \cite{ARV09} and \cite{Trevisan13}).

\begin{definition}
\label{def_fsigma}
Let $d$ be an $st$-separating semi-metric on $V$, 
and let $A$ be a non-empty subset of $V$. 
For each sign $\sigma\in\aset{\pm 1}$, let $f_A^\sigma:V\to\R$ be given by 
\[ 
  f_{d,A}^\sigma(v) = \tfrac{1}{2}\left[ d(v,s)+\sigma \cdot d(v,A) \right].
\]
When the metric $d$ is clear from the context, we omit it from the subscript and denote $f_A^\sigma(v)$. Define also the shorthands $f_A^+ \eqdef f_A^{+1}$ and $f_A^- \eqdef f_A^{-1}$. Lastly, define $f^\pm_A:V\rightarrow\R^2$ as $f^\pm_A:=(f^+_A,f^-_A)$.
\end{definition}

The latter map has the following key properties.

\begin{proposition}[$2$-Lipschitzness]\label{prop_lip}
For every $u,v\in V$,
$ \norm{f_A^\pm(u)-f_A^\pm(v)}_1 \leq 2 \cdot d(u,v) $.
\end{proposition}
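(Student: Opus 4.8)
The plan is to bound the two coordinates of $f^\pm_A$ separately and then add. Fix $u,v\in V$. By definition,
\[
  f_A^\sigma(u)-f_A^\sigma(v) = \tfrac12\left[\bigl(d(u,s)-d(v,s)\bigr) + \sigma\bigl(d(u,A)-d(v,A)\bigr)\right],
\]
so the first step is to observe that $\abs{d(u,s)-d(v,s)}\leq d(u,v)$ (the triangle inequality applied to the single point $s$, a special case of \Cref{lmm_triangle} with $A=\{s\}$), and similarly $\abs{d(u,A)-d(v,A)}\leq d(u,v)$ by \Cref{lmm_triangle}. Hence each of $\abs{f_A^+(u)-f_A^+(v)}$ and $\abs{f_A^-(u)-f_A^-(v)}$ is at most $\tfrac12\bigl(d(u,v)+d(u,v)\bigr)=d(u,v)$.

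Then I would simply sum the two coordinates: since $\norm{f_A^\pm(u)-f_A^\pm(v)}_1 = \abs{f_A^+(u)-f_A^+(v)} + \abs{f_A^-(u)-f_A^-(v)}$, the two bounds above give $\norm{f_A^\pm(u)-f_A^\pm(v)}_1 \leq 2\,d(u,v)$, as claimed.

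There is no real obstacle here; the only thing to be slightly careful about is that the factor of $\tfrac12$ in \Cref{def_fsigma} is exactly what makes each coordinate $1$-Lipschitz rather than $2$-Lipschitz, so that summing two coordinates yields the factor $2$ and not $4$. One could alternatively phrase it as: the map $v\mapsto(d(v,s),d(v,A))$ is $1$-Lipschitz into $\ell_1^2$ (by the two triangle inequalities), and $f^\pm_A$ is obtained from it by the linear map $(x,y)\mapsto\tfrac12(x+y,x-y)$, which has $\ell_1\to\ell_1$ operator norm $1$; but the direct coordinate-wise estimate is cleaner and I would present that.
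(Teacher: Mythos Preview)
Your proof is correct and essentially identical to the paper's: both bound each coordinate $\abs{f_A^\sigma(u)-f_A^\sigma(v)}$ by $d(u,v)$ using the triangle inequality for $d(\cdot,s)$ and \cref{lmm_triangle} for $d(\cdot,A)$, then sum the two coordinates to obtain the factor $2$.
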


\begin{proposition}\label{prop_goodsign}
For every $u,v\in V$,
$ 
  \norm{f_A^\pm(u)-f_A^\pm(v)}_1
  \geq \frac{1}{2}\left|d(u,A)-d(v,A)\right|
$.
\end{proposition}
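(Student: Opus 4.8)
The plan is to unfold \Cref{def_fsigma} and reduce the statement to a one-line consequence of the triangle inequality on the real line. Since the $\ell_1$ norm on $\R^2$ is the sum of the absolute values of the two coordinates, the first step is to write
\[
  \norm{f_A^\pm(u)-f_A^\pm(v)}_1
  = \abs{f_A^+(u)-f_A^+(v)} + \abs{f_A^-(u)-f_A^-(v)} .
\]
Substituting $f_A^\sigma(w)=\tfrac12\bigl[d(w,s)+\sigma\,d(w,A)\bigr]$ and abbreviating $x\eqdef d(u,s)-d(v,s)$ and $y\eqdef d(u,A)-d(v,A)$, the two coordinate differences become $f_A^+(u)-f_A^+(v)=\tfrac12(x+y)$ and $f_A^-(u)-f_A^-(v)=\tfrac12(x-y)$, so the expression above equals $\tfrac12\bigl(\abs{x+y}+\abs{x-y}\bigr)$.

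The second step is to lower-bound this quantity. By the triangle inequality, $\abs{x+y}+\abs{x-y}\ge\abs{(x+y)-(x-y)}=2\abs{y}$, hence
\[
  \norm{f_A^\pm(u)-f_A^\pm(v)}_1 \;\ge\; \abs{y} \;=\; \abs{d(u,A)-d(v,A)} ,
\]
which is in fact slightly stronger than the claimed bound — the factor $\tfrac12$ on the right-hand side is not needed and could be dropped. (Alternatively one may invoke the identity $\abs{a+b}+\abs{a-b}=2\max\{\abs a,\abs b\}$, which shows that the left-hand side equals $\max\{\abs x,\abs y\}$; this also yields, for free, the $2$-Lipschitz-type companion bound of \Cref{prop_lip}.)

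There is essentially no obstacle here: once the definitions are unwound, the argument is a single application of the triangle inequality. The only point warranting any care is bookkeeping the $\tfrac12$ factors — one from the definition of $f_A^\sigma$ and one from splitting the $\ell_1$ norm into its two coordinates — which together leave exactly the coefficient $1$ in front of $\abs{d(u,A)-d(v,A)}$, comfortably more than the stated $\tfrac12$.
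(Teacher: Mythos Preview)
Your proof is correct and follows essentially the same route as the paper: unfold the definition, reduce to $|x+y|+|x-y|$ for suitable reals, and apply an elementary inequality. Your use of the triangle inequality $|x+y|+|x-y|\ge 2|y|$ is in fact a touch sharper than the paper's argument (which only observes that one of $|y+x|,|y-x|$ is at least $|x|$), so you correctly note that the factor $\tfrac12$ in the statement is unnecessary.
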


\begin{proposition}\label{prop_separation}
$f^\pm_A$ is $st$-sandwiching.
\end{proposition}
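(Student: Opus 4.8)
The goal is to show that both coordinates $f_A^+$ and $f_A^-$ of the map $f_A^\pm$ are $st$-sandwiching, i.e. that $f_A^\sigma(s) \le f_A^\sigma(v) \le f_A^\sigma(t)$ for every $v \in V$ and each $\sigma \in \{\pm 1\}$. I would start by writing out the two endpoint values explicitly from Definition~\ref{def_fsigma}. At $v = s$ we have $d(s,s) = 0$, so $f_A^\sigma(s) = \tfrac12 \sigma \cdot d(s,A)$. At $v = t$ we have, using the $st$-separating property of $d$ (Definition~\ref{defn:sep}) applied to any point of $A$, that $d(t,s) = d(s,t)$ and also $d(t,A) = \min_{a\in A} d(t,a) = \min_{a \in A}\bigl(d(s,t) - d(s,a)\bigr) = d(s,t) - d(s,A)$. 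Hence $f_A^\sigma(t) = \tfrac12\bigl[d(s,t) + \sigma(d(s,t) - d(s,A))\bigr]$.

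Next I would handle the two signs separately, since the inequality that is ``tight'' switches between them. For $\sigma = +1$: the lower bound $f_A^+(s) \le f_A^+(v)$ reduces to $d(s,A) \le d(v,s) + d(v,A)$, which is exactly the triangle inequality of Lemma~\ref{lmm_triangle} (with $u = v$); the upper bound $f_A^+(v) \le f_A^+(t)$ reduces, after substituting the expression for $f_A^+(t)$ and simplifying, to $d(v,s) + d(v,A) \le 2d(s,t) - d(s,A)$, which I would derive from $d(v,s) \le d(s,t)$ (Proposition~\ref{prop:stsep}) together with $d(v,A) \le d(s,t) - d(s,A) + \text{[something]}$; more cleanly, since $d(v,A) \le d(v,t) + d(t,A) = d(v,t) + d(s,t) - d(s,A)$ by Lemma~\ref{lmm_triangle} and the computation above, and $d(v,s) + d(v,t) = d(s,t)$ by $st$-separation, adding these gives exactly $d(v,s) + d(v,A) \le 2d(s,t) - d(s,A)$. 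For $\sigma = -1$ the roles reverse: the lower bound becomes $d(s,A) \le d(v,s) + d(v,A)$ via $d(v,A) \ge d(s,A) - d(v,s)$ — wait, this is again the triangle inequality $d(s,A) \le d(s,v) + d(v,A)$ — and the upper bound $f_A^-(v) \le f_A^-(t)$ unwinds to $d(v,s) - d(v,A) \le d(s,t) - (d(s,t) - d(s,A)) = d(s,A)$, i.e. $d(v,s) \le d(v,A) + d(s,A)$, which follows from Lemma~\ref{lmm_triangle} applied with the point $v$ and set $A$: $d(v,s) \le d(v,a) + d(a,s)$ for the minimizing $a$, and $d(a,s) \le d(s,A)$... here one must be slightly careful about which $a$ achieves which minimum, so I would phrase it as: pick $a \in A$ with $d(v,a) = d(v,A)$, then $d(v,s) \le d(v,a) + d(a,s) = d(v,A) + d(a,s)$, and since $d(a,s) \ge d(s,A)$ this is not immediately what we want — instead use that $d(a,s) = d(s,t) - d(t,a) \le d(s,t) - d(t,A) = d(s,A)$ by $st$-separation, giving $d(v,s) \le d(v,A) + d(s,A)$ as needed.

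The main obstacle is bookkeeping rather than depth: one must keep straight, for each of the four inequalities (two signs, two endpoints), exactly which instance of the triangle inequality and which application of the $st$-separation identity is being invoked, and in particular be careful that the quantity $d(t,A)$ is rewritten correctly as $d(s,t) - d(s,A)$ using that $st$-separation holds for every point of $A$. Once that identity for $d(t,A)$ is established cleanly at the outset, each of the four cases is a one-line consequence of Lemma~\ref{lmm_triangle}, Proposition~\ref{prop:stsep}, and Definition~\ref{defn:sep}. I would therefore structure the proof as: (1) compute $f_A^\sigma(s)$ and $f_A^\sigma(t)$, including the $d(t,A)$ identity; (2) verify the two inequalities for $\sigma = +1$; (3) verify the two inequalities for $\sigma = -1$; and conclude that $f_A^\pm$ is $st$-sandwiching by definition.
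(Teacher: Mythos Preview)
Your overall structure matches the paper's proof, but there is a genuine error in the identity you claim at the outset: $d(t,A) = d(s,t) - d(s,A)$ is \emph{false} in general. From $st$-separation you correctly get $d(t,a) = d(s,t) - d(s,a)$ for each $a\in A$, but then
\[
d(t,A) = \min_{a\in A}\bigl(d(s,t) - d(s,a)\bigr) = d(s,t) - \max_{a\in A} d(s,a),
\]
not $d(s,t) - \min_{a\in A} d(s,a) = d(s,t) - d(s,A)$. (Take $A=\{a_1,a_2\}$ with $d(s,a_1)=1$, $d(s,a_2)=2$, $d(s,t)=5$: then $d(t,A)=3$ but $d(s,t)-d(s,A)=4$.) This error propagates. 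In the $\sigma=+1$ upper bound you end up proving $d(v,s)+d(v,A)\le 2d(s,t)-d(s,A)$, but the actual target is $d(v,s)+d(v,A)\le d(s,t)+d(t,A)$, which is a \emph{smaller} right-hand side, so your conclusion does not follow. In the $\sigma=-1$ upper bound you aim for $d(v,s)\le d(v,A)+d(s,A)$, which need not hold (you noticed the obstruction yourself), and your attempted repair again invokes the bad identity at the last step.

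The fix is simply to drop the substitution for $d(t,A)$ and work with it directly, exactly as the paper does. Your own intermediate steps already suffice: for $f_A^+(v)\le f_A^+(t)$, you have $d(v,A)\le d(v,t)+d(t,A)$ by \cref{lmm_triangle} and $d(v,s)+d(v,t)=d(s,t)$ by $st$-separation, which together give $d(v,s)+d(v,A)\le d(s,t)+d(t,A)$; for $f_A^-(v)\le f_A^-(t)$, your chain gives $d(v,s)\le d(v,A)+d(a,s)\le d(v,A)+d(s,t)-d(t,A)$ before you (incorrectly) replace $d(s,t)-d(t,A)$ by $d(s,A)$, and that penultimate inequality is already exactly $f_A^-(v)\le f_A^-(t)$. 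Also, for the $\sigma=-1$ lower bound the needed inequality is $d(v,A)\le d(v,s)+d(s,A)$, not $d(s,A)\le d(v,s)+d(v,A)$ as you wrote.
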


\section{$O(\log n)$-approximation for general demands}\label{sec:lr}

In this section we prove the following theorem.

\begin{theorem}\label{thm_lr}
There is a randomized polynomial-time algorithm that given an instance $G$ of $st$-$\cproblem{SparsestCut}$ with $n$ vertices, outputs a cut of sparsity at most $O(\log n)\cdot \OPT$, 
where $\OPT$ is the optimal sparsity of an $st$-separating cut in $G$.
\end{theorem}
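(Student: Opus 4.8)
The plan is to follow the classical Linial--London--Rabinovich approach for $\cproblem{SparsestCut}$, but starting from an $st$-separating semi-metric and using the $st$-sandwiching Fr\'echet embeddings from \Cref{def_fsigma} so that the cut we extract is guaranteed to separate $s$ and $t$. First I would set up the linear program: introduce a variable $d(u,v)\ge 0$ for every pair, impose symmetry, the triangle inequalities, the $st$-separation constraints $d(s,t)=d(s,v)+d(v,t)$ for all $v$ (from \Cref{defn:sep}), and the normalization $\sum_{u,v}\fdem(u,v)\,d(u,v)=1$; minimize $\sum_{u,v}\fcap(u,v)\,d(u,v)$. Any $st$-separating cut $(S,\bar S)$ gives a feasible solution via the (scaled) cut metric $d(u,v)=\lambda\cdot\1[u,v\text{ separated}]$ with $\lambda$ chosen to meet the normalization, and one checks this is $st$-separating precisely because the cut is $st$-separating; hence the LP optimum is at most $\OPT$. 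The LP has polynomially many variables and constraints, so it is solved in polynomial time, yielding an $st$-separating semi-metric $d$ with $\sum_{u,v}\fcap(u,v)\,d(u,v)\le\OPT$ and $\sum_{u,v}\fdem(u,v)\,d(u,v)=1$.

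Next I would invoke Bourgain's theorem to embed $(V,d)$ into $\ell_1$ with distortion $O(\log n)$, but in the refined form needed here: the standard construction produces, for each scale $i$ and each of $O(\log n)$ random subsets $A_{i,j}\subseteq V$, the Fr\'echet coordinate $v\mapsto d(v,A_{i,j})$. I would instead replace each such coordinate by the pair $f^{\pm}_{A_{i,j}}$ from \Cref{def_fsigma}, i.e. build $F:V\to\R^{2m}$ by concatenating all the $f^{\pm}_{A_{i,j}}$. By \Cref{prop_lip} this $F$ is $O(1)$-Lipschitz coordinatewise (so $\|F(u)-F(v)\|_1\le O(\log n)\cdot d(u,v)$ after accounting for the number of scales, exactly as in Bourgain's upper bound), and by \Cref{prop_goodsign} each pair $f^{\pm}_{A_{i,j}}$ contributes at least $\tfrac12|d(u,A_{i,j})-d(v,A_{i,j})|$, which is (up to the constant $\tfrac14$) exactly the lower-bound contribution used in Bourgain's analysis; therefore $\|F(u)-F(v)\|_1\ge \Omega(1)\cdot d(u,v)$ for all $u,v$. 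Crucially, by \Cref{prop_separation} every coordinate of $F$ is $st$-sandwiching, so $F$ is an $st$-sandwiching map.

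Finally I would apply \Cref{lmm_l1} to $F$. Since $F$ is $st$-sandwiching, the lemma produces an efficiently computable $st$-separating cut $(S,\bar S)$ with
\[
\fsp_G(S,\bar S)\le
\frac{\sum_{u,v}\fcap(u,v)\,\|F(u)-F(v)\|_1}{\sum_{u,v}\fdem(u,v)\,\|F(u)-F(v)\|_1}
\le
\frac{O(\log n)\cdot\sum_{u,v}\fcap(u,v)\,d(u,v)}{\Omega(1)\cdot\sum_{u,v}\fdem(u,v)\,d(u,v)}
= O(\log n)\cdot\OPT,
\]
using the embedding bounds in the numerator and denominator and then the LP guarantees. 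Randomization enters only through the choice of the sets $A_{i,j}$ in Bourgain's construction, so the whole algorithm is randomized polynomial time; a standard repetition argument boosts the success probability.

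The main obstacle I anticipate is verifying that the $f^{\pm}_A$ substitution does not break Bourgain's distortion analysis --- specifically, one must check that \Cref{prop_goodsign} supplies enough of a lower bound at the ``right'' scale for each pair $u,v$ to reproduce the $\Omega(1)$ lower bound, and that \Cref{prop_lip} combined with the number of scales and repetitions still gives only an $O(\log n)$ upper bound rather than something larger. Since $f^{\pm}_A$ is exactly $f^+_A,f^-_A$ and these differ from the plain Fr\'echet coordinate $d(\cdot,A)$ only by the common additive term $\tfrac12 d(\cdot,s)$ and a sign, the differences $\|f^{\pm}_A(u)-f^{\pm}_A(v)\|_1$ are comparable (up to universal constants) to $|d(u,A)-d(v,A)|$, which is precisely what makes the substitution transparent; but this comparison, and the bookkeeping of constants through all $O(\log n)$ scales, is the step that needs care. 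Everything else --- LP feasibility of the cut metric, the $st$-separation bookkeeping, and the final ratio computation --- is routine.
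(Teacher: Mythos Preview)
Your proposal is correct and follows essentially the same route as the paper: solve the LP over $st$-separating semi-metrics, apply Bourgain's theorem to obtain the sets $A_q$, replace each Fr\'echet coordinate $d(\cdot,A_q)$ by the pair $f^{\pm}_{A_q}$ (using \Cref{prop_lip}, \Cref{prop_goodsign}, and \Cref{prop_separation} exactly as you describe), and then apply \Cref{lmm_l1}. The only cosmetic difference is normalization---the paper scales the concatenated map by $\tfrac{1}{2p}$ so the upper bound reads $d(u,v)$ and the lower bound reads $d(u,v)/O(\log n)$, whereas you state them as $O(\log n)\cdot d(u,v)$ and $\Omega(1)\cdot d(u,v)$---but the resulting ratio is the same, and you already flagged this bookkeeping as the step needing care.
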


\paragraph{LP relaxation of $st$-$\cproblem{SparsestCut}$.}
Given an instance $G=(V,\fcap,\fdem)$ of $st$-$\cproblem{SparsestCut}$, denote by $\chi_S$ the characteristic function of an (arbitrary) optimal cut $(S,\bar S)$. 
The map $d_S(u,v)=\left|\chi_S(u)-\chi_S(v)\right|$ is a semi-metric on $V$, 
and thus $\cproblem{SparsestCut}$ can be relaxed to an LP that optimizes over all semi-metrics $d$ (see \cite{LR99,LLR95,AR98}).
In the $st$-$\cproblem{SparsestCut}$ case, 
the same $d_S$ is furthermore $st$-separating (Definition \ref{defn:sep}).
As usual, the objective is to minimize the ratio 
$\frac{\sum_{u,v\in V}\fcap(u,v)\cdot d(u,v)}{\sum_{u,v\in V}\fdem(u,v)\cdot d(u,v)}$, 
and by scaling the semi-metric we can assume the denominator equals $1$,
while maintaining the $st$-separating property.
We have thus proved the next lemma.

\begin{lemma} \label{lem:LPrelax}
LP \eqref{lp} is a relaxation of $st$-$\cproblem{SparsestCut}$.
\end{lemma}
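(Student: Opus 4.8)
The plan is to prove the relaxation statement by exhibiting a single feasible solution of LP~\eqref{lp} whose objective value equals $\OPT$; since LP~\eqref{lp} minimizes over its feasible region, this immediately gives that its optimum is at most $\OPT$. Concretely, I would fix an optimal $st$-separating cut $(S,\bar S)$, let $\chi_S$ be its characteristic function, and consider the cut semi-metric $d_S(u,v)\eqdef\abs{\chi_S(u)-\chi_S(v)}$ on $V$. First I would record the standard fact that $d_S$ is a semi-metric: it is symmetric, and since it takes values in $\{0,1\}$, the triangle inequality $d_S(u,w)\le d_S(u,v)+d_S(v,w)$ can only fail if $d_S(u,w)=1$ while $d_S(u,v)=d_S(v,w)=0$, but then $u,v$ and $v,w$ lie on the same side of the cut, forcing $u,w$ to the same side and contradicting $d_S(u,w)=1$.

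The second step, which is the only one specific to the $st$-setting, is to verify that $d_S$ is $st$-separating in the sense of \cref{defn:sep}. Because $(S,\bar S)$ is $st$-separating, exactly one of $s,t$ lies in $S$, so $d_S(s,t)=1$; and for any $v\in V$, the vertex $v$ agrees with exactly one of $s,t$ on which side of the cut it lies, so one of $d_S(s,v),d_S(v,t)$ equals $0$ and the other equals $1$, giving $d_S(s,v)+d_S(v,t)=1=d_S(s,t)$. Hence $d_S$ satisfies the $st$-separation constraint of LP~\eqref{lp}.

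It remains to treat the objective and the normalization. By the definition of sparsity, $\frac{\sum_{u,v\in V}\fcap(u,v)\,d_S(u,v)}{\sum_{u,v\in V}\fdem(u,v)\,d_S(u,v)}=\fsp_G(S,\bar S)=\OPT$. The $st$-separation constraint is invariant under multiplying $d_S$ by a positive scalar, and so is the ratio above, so I would rescale $d_S$ to make its denominator equal to $1$, obtaining a feasible LP solution of value $\OPT$. (If no $st$-separating cut separates any demand the rescaling is impossible, but then $\OPT=\infty$ by convention and the lemma is vacuous, so one may assume the denominator is positive.) There is no genuine obstacle in this lemma — it is a warm-up that packages a classical relaxation argument — and the only points deserving a word of care are precisely the $st$-separation check above and the scale-invariance that lets one normalize the denominator without leaving the feasible region.
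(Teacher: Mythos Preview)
Your proposal is correct and follows essentially the same approach as the paper: take an optimal $st$-separating cut, form the cut semi-metric $d_S(u,v)=\abs{\chi_S(u)-\chi_S(v)}$, observe it is $st$-separating in the sense of \cref{defn:sep}, and rescale so the demand sum equals $1$. The paper's argument is terser (it cites the standard relaxation for $\cproblem{SparsestCut}$ and just notes the $st$-separation and scaling), but the content is the same.
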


\begin{empheq}[box=\widefbox]{align*}\tag{P1}\label{lp} 
& \min && \sum_{u,v\in V}\fcap(u,v)\cdot d(u,v) && \\
& \mathrm{s.t.} && \sum_{u,v\in V}\fdem(u,v)\cdot d(u,v) = 1 && \\
&  && d(v,v) = 0 && \forall v\in V \\
&  && d(u,v) \geq 0 && \forall u,v\in V \\
&  && d(u,v) = d(v,u) && \forall u,v\in V \\
&  && d(u,v) \leq d(u,w)+d(w,v) && \forall u,v,w\in V \\ 
&  && d(s,t) = d(s,v)+d(v,t) && \forall v\in V 
\end{empheq}

For the rounding procedure we use the following theorem by Bourgain \cite{Bourgain85} and Linial, London and Rabinovich \cite{LLR95}.

\begin{theorem}\label{thm_bourgain}
Let $d$ be a semi-metric on $V$, with $|V|=n$. 
There are subsets $A_1,\ldots,A_p\subseteq V$ for $p=O(\log^2n)$,
such that 
\begin{equation}\label{eq_bourgain}
  \forall u,v\in V, \qquad
  \frac{1}{O(\log n)}\cdot d(u,v) 
    \leq \frac{1}{p}\sum_{q=1}^p\left| d(u,A_q)-d(v,A_q)\right| 
    \leq d(u,v) .
\end{equation}
Moreover, the sets $A_1,\ldots,A_p$ can be computed in randomized polynomial time.
\end{theorem}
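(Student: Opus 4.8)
The plan is to prove the two inequalities in \eqref{eq_bourgain} separately: the right-hand (contraction) bound holds deterministically for \emph{every} choice of the sets $A_1,\dots,A_p$, while the left-hand (non-contraction) bound is the heart of the matter and is obtained by the random-sampling construction of Linial--London--Rabinovich. For the easy direction, \cref{lmm_triangle} gives $d(u,A)\le d(u,v)+d(v,A)$ and, symmetrically, $d(v,A)\le d(u,v)+d(u,A)$ for any subset $A$, so $\abs{d(u,A)-d(v,A)}\le d(u,v)$; averaging over $q=1,\dots,p$ yields $\tfrac1p\sum_q\abs{d(u,A_q)-d(v,A_q)}\le d(u,v)$ no matter which sets are used.

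For the construction, set $K=\lceil\log_2 n\rceil$ and $R=C\log n$ for a sufficiently large absolute constant $C$, and let $p=KR=O(\log^2 n)$. For each scale $j\in\{1,\dots,K\}$ and repetition $i\in\{1,\dots,R\}$, form $A_{j,i}$ by including each vertex of $V$ independently with probability $2^{-j}$ (these are the sets, relabelled $A_1,\dots,A_p$). Fix a pair $u\neq v$ with $d(u,v)>0$, write $B(x,\rho)=\{y:d(x,y)\le\rho\}$ and $B^\circ(x,\rho)=\{y:d(x,y)<\rho\}$, and for $\ell\ge0$ let $\rho_\ell$ be the least radius with $\abs{B(u,\rho_\ell)}\ge2^\ell$ and $\abs{B(v,\rho_\ell)}\ge2^\ell$. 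This sequence is nondecreasing, $\rho_0=0$, and $\rho_\ell$ exceeds $d(u,v)$ once $2^\ell\ge n$, so there is a least index $\hat\jmath\in\{1,\dots,K\}$ with $\rho_{\hat\jmath}\ge d(u,v)/4$; \emph{redefine} $\rho_{\hat\jmath}:=d(u,v)/4$, and note $\rho_{j-1}<d(u,v)/4$ for all $j\le\hat\jmath$. The core claim is that for every $1\le j\le\hat\jmath$ and every $i$,
\[
  \Pr\bigl[\,\abs{d(u,A_{j,i})-d(v,A_{j,i})}\ge\rho_j-\rho_{j-1}\,\bigr]\ \ge\ c_0:=\tfrac14(1-e^{-1/2}).
\]
To see this, note that by minimality of the (pre-cap) radius $\rho_j$, one of $u,v$ --- call it $w$, the other $w'$ --- satisfies $\abs{B^\circ(w,\rho_j)}<2^j$. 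The event $\{A_{j,i}\cap B(w',\rho_{j-1})\neq\emptyset\}$ has probability at least $1-(1-2^{-j})^{2^{j-1}}\ge1-e^{-1/2}$ (since $\abs{B(w',\rho_{j-1})}\ge2^{j-1}$), the event $\{A_{j,i}\cap B^\circ(w,\rho_j)=\emptyset\}$ has probability at least $(1-2^{-j})^{2^j}\ge\tfrac14$, and the two events depend on disjoint vertex sets --- a common vertex $x$ would force $d(u,v)=d(w',w)\le d(w',x)+d(x,w)<\rho_{j-1}+\rho_j<d(u,v)/2$, a contradiction --- hence are independent, and when both occur we get $d(w',A_{j,i})\le\rho_{j-1}$ and $d(w,A_{j,i})\ge\rho_j$.

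To finish, fix $u\neq v$ and a scale $j\le\hat\jmath$: the number of repetitions $i$ for which the good event above holds stochastically dominates a $\mathrm{Binomial}(R,c_0)$ variable, so by a Chernoff bound it is at least $(c_0/2)R$ except with probability $\le n^{-4}$ (for $C$ large enough). Union-bounding over the at most $K\le\log_2 n+1$ scales and the $\binom n2$ pairs, with probability $1-o(1)$ we have, for every $u\neq v$ (the case $d(u,v)=0$ being trivial), that at least $(c_0/2)R$ of the sets $A_{j,i}$ contribute at least $\rho_j-\rho_{j-1}$ for each $j\le\hat\jmath(u,v)$, so that
\[
  \sum_{q=1}^{p}\abs{d(u,A_q)-d(v,A_q)}\ \ge\ \frac{c_0R}{2}\sum_{j=1}^{\hat\jmath}(\rho_j-\rho_{j-1})\ =\ \frac{c_0R}{2}\cdot\frac{d(u,v)}{4};
\]
dividing by $p=KR$ gives $\tfrac1p\sum_q\abs{d(u,A_q)-d(v,A_q)}\ge\tfrac{c_0}{8K}\,d(u,v)=\Omega(d(u,v)/\log n)$, which is the left inequality. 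Sampling the $p$ sets and computing all values $d(v,A_q)$ takes $O(pn^2)$ time, so the procedure runs in randomized polynomial time, and since it succeeds with probability $1-o(1)$ one may simply output the result (or, if desired, verify the bounds over all pairs and rerun).

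The main obstacle is the single-pair estimate. Because the construction is symmetric in all vertices, one cannot control which of the two ``outer'' open balls $B^\circ(u,\rho_j),B^\circ(v,\rho_j)$ is the small one, so the argument at every scale must be symmetric in $u$ and $v$; moreover the independence of the ``hit an inner ball'' and ``miss an outer ball'' events is not automatic --- it hinges on the truncation $\rho_{\hat\jmath}=d(u,v)/4$, which via the triangle inequality forces those balls to be disjoint. Getting the scales, the cap, and these two events to line up so that the telescoping sum $\sum_j(\rho_j-\rho_{j-1})$ recovers a constant fraction of $d(u,v)$ is the crux; the contraction bound, the Chernoff/union-bound step, and the running-time analysis are all routine.
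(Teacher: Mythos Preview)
The paper does not prove \cref{thm_bourgain}; it is quoted as a known result of Bourgain and of Linial--London--Rabinovich and used as a black box in the proof of \cref{thm_lr}. Your proposal supplies the standard LLR random-sampling proof, and it is essentially correct: the contraction bound via \cref{lmm_triangle}, the scale-wise sampling at densities $2^{-j}$, the definition of the radii $\rho_\ell$ with the cap at $d(u,v)/4$, the disjointness of the inner and outer balls via the triangle inequality, and the Chernoff/union-bound finish are all the usual ingredients, executed carefully. One minor point you gloss over is what $d(v,A_{j,i})$ means when $A_{j,i}=\emptyset$; this is harmless because in the good event $A_{j,i}$ meets $B(w',\rho_{j-1})$ and so is nonempty, but it is worth a sentence. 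Otherwise there is nothing to correct, and nothing to compare against in the paper itself.
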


\begin{proof}[Proof of \cref{thm_lr}]
Given an instance $G=(V,\fcap,\fdem)$ of $st$-$\cproblem{SparsestCut}$ with $|V|=n$ , set up and solve LP \eqref{lp}. Denote its optimum by $\LP$ and let $d:V\times V\to\R$ be a solution that attains it. Observe that $d$ is an $st$-separating semi-metric on $V$, 
and that Lemma \ref{lem:LPrelax} implies $\LP \leq \OPT$.

Apply \cref{thm_bourgain} and let $A_1,\ldots,A_p$ be the resulting subsets. For each $i=1,\ldots,p$, define the maps $f_{A_i}^+,f_{A_i}^-,f_{A_i}^\pm$ as in \cref{def_fsigma}. By \cref{prop_goodsign}, 
\[ 
  \forall u,v\in V, \qquad 
  \norm{f_{A_i}^\pm(u)-f_{A_i}^\pm(v)}_1 
  \geq \frac{1}{2}\left|d(u,A_i)-d(v,A_i)\right| . 
\]
By summing these over all $u,v\in V$ with appropriate multipliers,
\begin{equation}\label{eq_goodsign}
  \sum_{u,v\in V}\fdem(u,v)\cdot\norm{f_{A_i}^\pm(u)-f_{A_i}^\pm(v)}_1
  \geq\frac{1}{2}\sum_{u,v\in V}\fdem(u,v)\cdot|d(u,A_i)-d(v,A_i)| .
\end{equation}
Define $g:V\to\R^{2p}$ as
\[ g(v)=\tfrac{1}{2p}\left(f_{A_1}^+(v),f_{A_1}^-(v),\ldots,f_{A_p}^+(v),f_{A_p}^-(v)\right). \]
By \cref{eq_goodsign} and the first inequality in \cref{eq_bourgain}, we get
\begin{align}
  \sum_{u,v\in V}\fdem(u,v)\cdot\norm{g(u)-g(v)}_1 
  &\geq \frac{1}{4p}\sum_{i=1}^p\sum_{u,v\in V}\fdem(u,v)\cdot\left|d(u,A_i)-d(v,A_i)\right| \nonumber\\
  &\geq \frac{1}{O(\log n)}\sum_{u,v\in V}\fdem(u,v)\cdot d(u,v). 
  \label{eq_lr_lower}
\end{align}
At the same time, by \cref{prop_lip}, every $i$ and $u,v\in V$ satisfy $\norm{f_{A_i}^\pm(u)-f_{A_i}^\pm(v)}_1 \leq 2d(u,v)$, thus
\begin{equation}\label{eq_lr_upper}
\norm{g(u)-g(v)}_1 = \frac{1}{2p}\sum_{i=1}^p\norm{f_{A_i}^\pm(u)-f_{A_i}^\pm(v)}_1 \leq d(u,v) .
\end{equation}
Putting \cref{eq_lr_lower,eq_lr_upper} together,
\[
\frac{\sum_{u,v\in V}\fcap(u,v)\cdot\norm{g(u)-g(v)}_1}{\sum_{u,v\in V}\fdem(u,v)\cdot\norm{g(u)-g(v)}_1}
\leq \frac{\sum_{u,v\in V}\fcap(u,v)\cdot d(u,v)}{\sum_{u,v\in V}\fdem(u,v)\cdot d(u,v)}\cdot O(\log n)
= \LP\cdot O(\log n) \]

Consequently, applying \cref{lmm_l1} to $g$ produces a cut $(S,\bar S)$ with sparsity $sp_G(S,\bar S)\leq O(\log n)\cdot \LP\leq O(\log n)\cdot \OPT$. By \cref{prop_separation}, for each $i$ the map $f_{A_i}^\pm$ is $st$-sandwiching, hence so is $G$, and therefore \cref{lmm_l1} further asserts that $(S,\bar S)$ is an $st$-separating cut.
\end{proof}

\paragraph{Extensions.}
If the demand function is supported only inside some subset $K\subsetneq V$
(formally, $\fdem(u,v)>0$ holds only when both $u,v\in K$),
then essentially the same proof achieves approximation 
$O(\log \card{K})$, similarly to \cite{LLR95,AR98}.

If $G$ (more precisely, the graph defined by the nonzero capacities) 
excludes a fixed minor and the demands are product demands, 
then essentially the same proof achieves $O(1)$-approximation, 
similarly to \cite{KPR93,Rao99,FT03,LS13,AGGNT13}.
Such $O(1)$-approximation is also achieved by the approach described in \Cref{app:iterative}.

\section{Cheeger-type approximation for product demands}
\label{sec:Cheeger}

Recall that an instance of $st$-$\cproblem{ProductSparsestCut}$ is $G=(V,\fcap,\mu)$, 
where $\mu$ is a probability distribution over the vertex set $V$,
and the demand function is defined accordingly as $\fdem(u,v)=\mu(v)\mu(v)$.
In this section we prove the following theorem. 

\begin{theorem}\label{thm_cheeger}
There is a randomized polynomial-time algorithm that given an instance $G$ of $st$-$\cproblem{ProductSparsestCut}$ with $n$ vertices, outputs a cut with sparsity at most $O(\sqrt{\OPT})$, 
where $\OPT$ is the optimal sparsity of an $st$-separating cut in $G$.
\end{theorem}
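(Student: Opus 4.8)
The plan is to mirror the structure of Trevisan's Cheeger-type analysis for the non-$st$ case, using the $st$-separating machinery of \Cref{sec:basic}. First I would write an SDP relaxation analogous to LP \eqref{lp}, but where the semi-metric $d$ is required to be of negative type (i.e. $\sqrt{d}$ embeds isometrically into $\ell_2$), in addition to the $st$-separating constraint $d(s,t)=d(s,v)+d(v,t)$ for all $v$. Just as \Cref{lem:LPrelax} showed for the LP, this SDP is a valid relaxation: the cut semi-metric $d_S$ is of negative type and is $st$-separating, so $\SDP \le \OPT$. Solve the SDP to get an $st$-separating negative-type semi-metric $d$ with $\sum_{u,v}\fdem(u,v)d(u,v)=1$ and $\sum_{u,v}\fcap(u,v)d(u,v)=\SDP$.

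Next, the rounding splits into two regimes according to whether $d$ is ``spread out'' or ``concentrated,'' exactly as in the Cheeger-type arguments. For the concentration, since demands are product demands $\fdem(u,v)=\mu(u)\mu(v)$, I would pick a point $c$ (or the distribution $\mu$ itself) and consider the quantity $\Phi \eqdef \sum_{u,v} \mu(u)\mu(v)\, d(u,v)$, which equals $1$ by normalization, together with its second-moment analogue $\sum_{u,v}\mu(u)\mu(v)\,d(u,v)^2$. If this second moment is small compared to $\SDP^{-1}$, then $d$ is concentrated: most of the $\mu$-mass lies within $O(\sqrt{\SDP})$ (in the $d$-metric) of a central region, and one can take a Fréchet-type embedding $f^\pm_A$ from \Cref{def_fsigma} with a single well-chosen set $A$ — the ball around the central point — and invoke \Cref{prop_lip,prop_goodsign,prop_separation} together with \Cref{lmm_l1} to extract an $st$-separating cut of sparsity $O(\sqrt{\SDP})$. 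This is the ``easy case'' the authors flagged as being handled ``similarly to \cite{ARV09} and \cite{Trevisan13}.''

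In the complementary ``spread'' case, the negative-type structure is used: $d=\norm{\phi(\cdot)-\phi(\cdot)}_2^2$ for some map $\phi:V\to\ell_2$, and spreadness means a constant fraction of the $\mu^{\otimes 2}$-mass of pairs is at squared distance $\Omega(1)$, i.e. at $\ell_2$-distance $\Omega(1)$ after rescaling. Here I would apply an ARV-style $\ell_2^2$-into-$\ell_1$ argument (the projection/region-growing step producing $\Omega(1/\sqrt{\log n})$ or, crucially for a Cheeger-type bound, a bound independent of $n$) to obtain an $\ell_1$ embedding $g$ with $\sum_{u,v}\fdem(u,v)\norm{g(u)-g(v)}_1 \ge \Omega(1)$ and $\norm{g(u)-g(v)}_1 \le d(u,v)$, so that \Cref{lmm_l1} yields sparsity $O(\SDP) = O(\sqrt{\OPT})$ (using $\SDP\le\OPT\le 1$). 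The key modification from prior work — and the point the introduction identifies as ``our main challenge'' — is to make this $\ell_1$ embedding $st$-sandwiching. I would achieve this by composing with $f^\pm_A$ as in \Cref{sec:lr}: replace each coordinate arising from a subset $A_i$ by the pair $(f^+_{A_i}, f^-_{A_i})$, which by \Cref{prop_separation} is $st$-sandwiching and by \Cref{prop_lip,prop_goodsign} loses only constant factors; then \Cref{lmm_l1} delivers an $st$-\emph{separating} cut.

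The main obstacle I anticipate is the spread case: pushing the $st$-separation constraint through the ARV-type geometric argument without destroying the lower bound on the separated demand. Concretely, one must ensure that after forcing $f(s)\le f(v)\le f(t)$ coordinate-wise, enough of the ``well-separated'' pairs of vertices remain at comparable distance under $f^\pm$ — this is exactly what \Cref{prop_goodsign} is designed to guarantee, but verifying that the constant fraction of good pairs survives requires that the chosen sets $A_i$ interact correctly with the constraint $d(s,t)=d(s,v)+d(v,t)$, i.e. with the fact (\Cref{prop:stsep}) that $s,t$ realize the diameter. A secondary technical point is choosing the threshold that separates the two cases so that both branches give $O(\sqrt{\OPT})$; this threshold will be a function of $\SDP$ (hence ``input-dependent,'' as the authors contrast with the divide-and-conquer approach of \Cref{app:iterative}).
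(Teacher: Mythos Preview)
Your overall architecture matches the paper: an SDP relaxation with the $st$-separation constraint, followed by a two-case analysis. The dense-ball case is handled essentially as you describe, via a single Fr\'echet embedding $f_B^\pm$ (the paper's threshold is simply whether some radius-$\tfrac14$ ball carries $\mu$-mass $\ge\tfrac12$, not a second-moment condition, and that case yields $O(\SDP)$ rather than $O(\sqrt{\SDP})$; but this is cosmetic).

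The genuine gap is in the spread case. Your plan is to build an $\ell_1$ embedding whose coordinates come from subsets $A_i$ and then replace each coordinate by the pair $(f^+_{A_i},f^-_{A_i})$ to restore $st$-sandwiching. But the only known way to get a Cheeger-type (i.e., $n$-independent) bound in the spread case is Gaussian projection of the SDP vectors $\{\vect x_v\}$, which preserves the \emph{unsquared} distances $\norm{\vect x_u-\vect x_v}_2$; its coordinates are $\langle \vect x_v,\vect g\rangle$, not $d(v,A)$, so the $f_A^\pm$ machinery of \Cref{def_fsigma} simply does not apply. If instead you insist on Fr\'echet coordinates so that $f_A^\pm$ can be used, you are back to ARV's well-separated sets and incur the $\sqrt{\log n}$ loss you parenthetically acknowledged --- that is not a Cheeger-type bound. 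Your claimed pair of inequalities $\norm{g(u)-g(v)}_1\le d(u,v)$ and $\sum_{u,v}\mu(u)\mu(v)\norm{g(u)-g(v)}_1\ge\Omega(1)$, yielding $O(\SDP)$, would actually be stronger than what the paper achieves here ($O(\sqrt{\SDP})$), but no such embedding is available without $n$-dependent loss.

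The paper's actual mechanism in the spread case is new and does not go through \Cref{def_fsigma} at all. After rotating so that $\vect x_s=\vect 0$ and $\vect x_t=T\vect e_1$, write $\vect x_v=(y_v,\vect z_v)$; the $st$-separation constraint forces $y_v\in[0,T]$ and $\norm{\vect z_v}_2\le T$. The map is the asymmetric projection $f^{(0)}(v)=y_v+\tfrac16\langle\vect z_v,\vect g\rangle$: the component along the $st$-axis is kept deterministically, and only the orthogonal part is randomized, with a small coefficient. This is not yet $st$-sandwiching because the Gaussian term can push $f^{(0)}(v)$ outside $[0,T]$, so the paper first \emph{clips} $f^{(0)}$ to the enlarged interval $[-\tfrac13 T,\tfrac43 T]$ and then \emph{flips} the two margin intervals back into $[0,T]$, averaging over two shrinking factors $\alpha\in\{\tfrac13,1\}$. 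A case analysis shows both steps lose only constant factors in pairwise distances, and the final map has image in $[0,T]$ with $s\mapsto 0$ and $t\mapsto T$, hence is $st$-sandwiching. The resulting bound is $O(\sqrt{\SDP})$, via Jensen on the capacity side ($\sum_{u,v}\fcap(u,v)\norm{\vect x_u-\vect x_v}_2\le\sqrt{\SDP}$) and the spread hypothesis on the demand side. This clip-and-flip construction, together with the asymmetric treatment of the $st$-direction, is the idea missing from your proposal.
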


As mentioned in \Cref{section_intro}, Trevisan \cite{Trevisan13} proved a similar result for the usual (non-$st$) version of $\cproblem{ProductSparsestCut}$. His algorithm employs a semidefinite programming relaxation proposed by Goemans and by Linial (and used in \cite{ARV09} and followup work).
This relaxation is based on the triangle inequality constraint
\[ 
  \norm{\vect x_u-\vect x_v}^2_2 
  \leq \norm{\vect x_u-\vect x_w}^2_2+\norm{\vect x_w-\vect x_v}^2_2 , 
  \qquad \forall u,v,w\in V, 
\]
which forces $d(u,v)=\norm{\vect x_u-\vect x_v}^2_2$ to be a semi-metric. As in \Cref{sec:lr}, we modify the relaxation to force this semi-metric to be $st$-separating.

\paragraph{SDP relaxation of $st$-$\cproblem{ProductSparsestCut}$.}
\begin{empheq}[box=\widefbox]{align*}\tag{P2}\label{sdp}
& \min && \sum_{u,v\in V}\fcap(u,v)\cdot\norm{\vect x_u-\vect x_v}^2_2 && \\
& s.t. && \sum_{u,v\in V}\mu(u)\mu(v)\cdot\norm{\vect x_u-\vect x_v}^2_2 = 1 && \\
&  && \norm{\vect x_u-\vect x_v}^2_2 \leq \norm{\vect x_u-\vect x_w}^2_2+\norm{\vect x_w-\vect x_v}^2_2 && \forall u,v,w\in V \\ 
&  && \norm{\vect x_s-\vect x_t}^2_2 = \norm{\vect x_s-\vect x_v}^2_2+\norm{\vect x_v-\vect x_t}^2_2 && \forall v\in V 
\end{empheq}

\begin{lemma}
SDP \eqref{sdp} is a relaxation of $st$-$\cproblem{ProductSparsestCut}$.
\end{lemma}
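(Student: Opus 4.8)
The plan is to verify the three standard conditions that make SDP \eqref{sdp} a valid relaxation of $st$-$\cproblem{ProductSparsestCut}$: feasibility of a point coming from an optimal integral solution, matching the objective value, and preserving the constraints. Concretely, let $(S,\bar S)$ be an optimal $st$-separating cut for the instance $G=(V,\fcap,\mu)$, and let $\chi_S$ be its characteristic function. For each $v\in V$ define a one-dimensional vector $\vect x_v\in\R^1$ by $\vect x_v = \chi_S(v)$ (after fixing, say, $s\in S$). The key observation is that $\norm{\vect x_u-\vect x_v}_2^2 = |\chi_S(u)-\chi_S(v)|^2 = |\chi_S(u)-\chi_S(v)| = d_S(u,v)$, so the squared Euclidean distances induced by this embedding coincide exactly with the cut semi-metric $d_S$ that already appeared in \Cref{sec:lr}.

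The verification then proceeds in three short steps. First, I would check the two triangle-type constraints: since the squared distances equal $d_S$, which is a genuine semi-metric (values in $\{0,1\}$), the general triangle inequality $\norm{\vect x_u-\vect x_v}_2^2 \leq \norm{\vect x_u-\vect x_w}_2^2 + \norm{\vect x_w-\vect x_v}_2^2$ holds; and because $(S,\bar S)$ is $st$-separating, $s$ and $t$ lie on opposite sides, so for every $v\in V$ exactly one of $d_S(s,v)$, $d_S(v,t)$ equals $1$ and the other $0$, giving $d_S(s,t)=1=d_S(s,v)+d_S(v,t)$, which is precisely the $st$-separation constraint of \eqref{sdp}. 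Second, I would handle the normalization: the unscaled embedding has $\sum_{u,v}\mu(u)\mu(v)\norm{\vect x_u-\vect x_v}_2^2 = 2\sum_{u\in S,v\in\bar S}\mu(u)\mu(v)$, which is positive (it is the denominator of $\fsp_G(S,\bar S)$ up to the factor $2$), so we may rescale all vectors $\vect x_v$ by a common constant to make this sum equal $1$; rescaling multiplies every squared distance by the same factor and hence preserves both triangle constraints and the $st$-separation constraint. Third, with this normalization the objective becomes $\sum_{u,v}\fcap(u,v)\norm{\vect x_u-\vect x_v}_2^2 = \frac{2\sum_{u\in S,v\in\bar S}\fcap(u,v)}{2\sum_{u\in S,v\in\bar S}\mu(u)\mu(v)} = \fsp_G(S,\bar S) = \OPT$, so the SDP optimum is at most $\OPT$.

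There is no serious obstacle here; the only point that needs a moment's care is the degenerate case where $\sum_{u\in S,v\in\bar S}\mu(u)\mu(v)=0$, i.e.\ the cut separates no demand. But this cannot happen for an optimal $st$-separating cut, since such a cut always separates the demand pair $\{s,t\}$ (which has $\fdem(s,t)=\mu(s)\mu(t)$), and if $\mu(s)=0$ or $\mu(t)=0$ one can observe that the optimum sparsity is infinite and the statement is vacuous, or alternatively restrict attention to the support of $\mu$. I would state this caveat briefly and otherwise present the three bullets above as a direct computation, noting that it is entirely analogous to \Cref{lem:LPrelax} for the LP relaxation.
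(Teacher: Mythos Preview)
Your proposal is correct and follows essentially the same approach as the paper: both embed the optimal $st$-separating cut as a one-dimensional $\{0,1\}$-valued assignment, observe that the squared distances reproduce the cut semi-metric $d_S$, and scale to meet the normalization constraint. The only cosmetic difference is that the paper writes down the scaled solution directly (setting $x_u=0$ on one side and $x_u=\alpha^{-1/2}$ on the other, with $\alpha=2\sum_{u\in S,v\in\bar S}\mu(u)\mu(v)$), whereas you first embed via $\chi_S$ and then rescale; your extra discussion of the degenerate zero-demand case is something the paper simply omits (and your remark that $\OPT$ is infinite when $\mu(s)\mu(t)=0$ is not quite right, since other separated pairs may contribute demand, but this does not affect the main argument).
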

\begin{proof}
Given an $st$-separating cut $(S,\bar S)$,
set $\alpha \eqdef 2\sum_{u\in S,v\in\bar S} \mu(u)\mu(v)$,
and consider a one-dimensional (i.e., real-valued) solution to SDP \eqref{sdp} 
where $x_u=0$ for $u\in S$, and $x_u=\alpha^{-1/2}$ for $u\in\bar S$.
This solution can be verified to satisfy all the constraints of SDP \eqref{sdp},
and its objective value is exactly $\fsp_{G}(S,\bar S)$.
The lemma follows by letting the cut $(S,\bar S)$ be an optimal solution for
the problem.
\end{proof}

To round a solution to \eqref{sdp}, 
we consider two cases, similarly to \cite{LR99,ARV09,Trevisan13}. 
In the first case, we get a constant factor approximation 
using the tools of \Cref{sec:stsep}. 
The second case is more difficult and will require a new approach to maintain the $st$-separation.

\begin{lemma}\cite[Lemma 4]{Trevisan13} \label{lmm_cases}
Let $d$ be a semi-metric on a point set $V$, and $\mu$ a probability distribution over $V$. At least one of the following two holds:
\begin{itemize} \compactify
\item[I.] Dense ball: There is $o\in V$ such that $B=\{v\in V:d(v,o)\leq\frac{1}{4}\}$, the ball centered at $o$ with radius $\frac{1}{4}$, satisfies $\mu(B)\geq\frac{1}{2}$.
\item[II.] No dense ball: $\Pr_{u,v\sim\mu}[d(u,v)>\frac{1}{4}]\geq\frac{1}{2}$, where $u,v$ are sampled independently from $\mu$.
\end{itemize}
\end{lemma}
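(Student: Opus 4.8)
The plan is to prove the equivalent statement that if the dense-ball alternative~I fails then the no-dense-ball alternative~II must hold (this suffices, since the two together just say ``I or~II''). The bridge between the two alternatives will be the elementary identity, valid for any symmetric $d$ and any distribution $\mu$,
\[
  \Pr_{u,v\sim\mu}\!\bigl[d(u,v)\le\tfrac14\bigr]
  = \sum_{u\in V}\mu(u)\cdot\mu\bigl(\{v\in V: d(u,v)\le\tfrac14\}\bigr)
  = \EX_{u\sim\mu}\bigl[\mu(B_u)\bigr],
\]
where $B_u\eqdef\{v\in V:d(v,u)\le\tfrac14\}$ is the radius-$\tfrac14$ ball centered at $u$; the first equality is the law of total probability (conditioning on the first sample $u$ and using independence of the two samples), and the second uses the symmetry $d(u,v)=d(v,u)$ to rewrite $\{v:d(u,v)\le\tfrac14\}$ as $B_u$.

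\textbf{Main argument.} With this identity in hand the rest is a one-line averaging step. I would assume that alternative~I fails, so that \emph{every} radius-$\tfrac14$ ball has $\mu$-mass strictly below $\tfrac12$; in particular $\mu(B_u)<\tfrac12$ for every $u$ in the (finite) support of $\mu$, and since $V$ is finite this even gives $\max_{u}\mu(B_u)<\tfrac12$. Then the average on the right-hand side of the identity satisfies $\EX_{u\sim\mu}[\mu(B_u)]<\tfrac12$, hence $\Pr_{u,v\sim\mu}[d(u,v)\le\tfrac14]<\tfrac12$, and therefore
\[
  \Pr_{u,v\sim\mu}\!\bigl[d(u,v)>\tfrac14\bigr]
  = 1-\Pr_{u,v\sim\mu}\!\bigl[d(u,v)\le\tfrac14\bigr]
  > \tfrac12 \,\ge\, \tfrac12 ,
\]
which is exactly alternative~II.

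\textbf{Main obstacle.} There is essentially no obstacle here: the argument is pure averaging, and it invokes neither the triangle inequality (only the symmetry of $d$) nor any normalization of $d$. The single point requiring a moment's care is the passage from ``every ball has mass $<\tfrac12$'' to ``the $\mu$-average of ball masses is $<\tfrac12$'', which is immediate because $V$ (hence the support of $\mu$) is finite, so a finite family of reals each below $\tfrac12$ has maximum below $\tfrac12$. The particular constants $\tfrac14$ and $\tfrac12$ play no role beyond forming a matched pair; any radius $r>0$ with threshold pair $(\theta,1-\theta)$ would go through verbatim.
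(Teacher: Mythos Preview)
Your proof is correct and follows essentially the same approach as the paper: assume alternative~I fails, condition on one of the two independent samples, and use that the radius-$\tfrac14$ ball around that sample has $\mu$-mass below $\tfrac12$. The paper states this in two sentences without writing the identity $\Pr_{u,v}[d(u,v)\le\tfrac14]=\EX_{u}[\mu(B_u)]$ explicitly, but the reasoning is the same. (A minor remark: your caution about needing finiteness of $V$ to pass from ``each $\mu(B_u)<\tfrac12$'' to ``$\EX_u[\mu(B_u)]<\tfrac12$'' is unnecessary---if a random variable is strictly below a constant almost surely then its expectation is strictly below that constant---but in this setting $V$ is finite anyway.)
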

\begin{proof}
Suppose the first condition fails. Sample $v\sim\mu$. The ball $B_v$ centered at $v$ with radius $\frac{1}{4}$ surely satisfies $\mu(B_v)<\frac{1}{2}$, 
hence when sampling $u\sim\mu$, there is probability at least $\frac{1}{2}$ for $u$ to be at distance at least $\frac{1}{4}$ from $v$, and the second condition holds.
\end{proof}

We consider henceforth the semi-metric $d(u,v)=\norm{\vect x_u-\vect x_v}^2_2$ 
derived from a solution to SDP \eqref{sdp},
and handle the two cases of \cref{lmm_cases} separately.

\subsection{Case I: Dense ball}

\begin{lemma}\label{lmm_denseball}
Let $G=(V,\fcap,\mu)$ be an instance of $st$-$\cproblem{ProductSparsestCut}$. Denote by $\SDP$ the optimum of \eqref{sdp} and let $\{\vect x_v\}_{v\in V}$ be an optimal solution to it. Suppose there is $o\in V$ such that the ball $B=\{v\in V:\norm{\vect x_v-\vect x_o}^2_2\leq\frac{1}{4}\}$ satisfies $\mu(B)\geq\frac{1}{2}$. Then a cut with sparsity $O(\SDP)$ can be efficiently computed.
\end{lemma}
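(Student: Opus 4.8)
The plan is to reduce the whole lemma to \cref{lmm_l1} by exhibiting an efficiently computable, $st$-sandwiching map $f:V\to\R^m$ whose ratio $\frac{\sum_{u,v}\fcap(u,v)\norm{f(u)-f(v)}_1}{\sum_{u,v}\fdem(u,v)\norm{f(u)-f(v)}_1}$ is $O(\SDP)$; \cref{lmm_l1} then yields an $st$-separating cut of sparsity $O(\SDP)$. Write $d(u,v)=\norm{\vect x_u-\vect x_v}_2^2$ for the semi-metric coming from the optimal SDP solution, so that $d$ is $st$-separating (by the last constraint of \eqref{sdp}), the normalization gives $\sum_{u,v}\fdem(u,v)\,d(u,v)=1$, and $\sum_{u,v}\fcap(u,v)\,d(u,v)=\SDP$. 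The candidate map is $f:=f^\pm_B$ from \cref{def_fsigma}, taken with the dense ball $B$ playing the role of the subset $A$. Then \cref{prop_separation} gives the $st$-sandwiching property for free, and \cref{prop_lip} controls the numerator: $\sum_{u,v}\fcap(u,v)\norm{f^\pm_B(u)-f^\pm_B(v)}_1\le 2\sum_{u,v}\fcap(u,v)\,d(u,v)=2\,\SDP$. So everything comes down to a single estimate: a constant lower bound on the denominator.

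For the denominator, by \cref{prop_goodsign} it suffices to show $\sum_{u,v}\fdem(u,v)\,\abs{d(u,B)-d(v,B)}=\Omega(1)$, that is, $\EX_{u,v\sim\mu}\abs{d(u,B)-d(v,B)}=\Omega(1)$ for independent $u,v\sim\mu$. I would prove this in two steps. First, $B$ has radius $\tfrac14$ and hence diameter at most $\tfrac12$, so the triangle inequality gives $d(u,v)\le d(u,B)+d(v,B)+\tfrac12$ for all $u,v\in V$; averaging over $u,v\sim\mu$ and using $\EX_{u,v}[d(u,v)]=1$ yields $2\,\EX_{u\sim\mu}[d(u,B)]\ge\tfrac12$, i.e. $\EX_{u\sim\mu}[d(u,B)]\ge\tfrac14$. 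Second, since $\mu(B)\ge\tfrac12$ and $d(v,B)=0$ whenever $v\in B$, restricting the expectation to the event $v\in B$ and using independence gives $\EX_{u,v}\abs{d(u,B)-d(v,B)}\ge\mu(B)\cdot\EX_{u\sim\mu}[d(u,B)]\ge\tfrac18$. Feeding this through \cref{prop_goodsign} gives $\sum_{u,v}\fdem(u,v)\norm{f^\pm_B(u)-f^\pm_B(v)}_1\ge\tfrac1{16}$.

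Combining the two bounds, the distortion ratio of $f^\pm_B$ is at most $32\,\SDP=O(\SDP)$, and \cref{lmm_l1} converts it into an $st$-separating cut of sparsity $O(\SDP)$, as required; efficiency is routine, since the SDP is solvable in polynomial time, the center $o$ (hence $B$) is found by exhausting over $o\in V$, the map $f^\pm_B$ is read off from $d$, and the rounding of \cref{lmm_l1} is efficient. The only real content is the denominator estimate; its one subtlety is that the dense-ball hypothesis must be combined with the normalization $\EX_{u,v}[d(u,v)]=1$ to pin a constant fraction of the demand onto pairs with one endpoint at distance $\Omega(1)$ from $B$. The $st$-separation, which is the point of departure from Trevisan's argument \cite{Trevisan13}, costs nothing here: it is already built into $f^\pm_B$ through \cref{prop_separation}, whereas the plain Fr\'echet embedding $v\mapsto d(v,B)$ used in the non-$st$ setting need not be $st$-sandwiching.
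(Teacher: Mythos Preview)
Your proposal is correct and follows essentially the same route as the paper: define $f=f^\pm_B$, bound the numerator by \cref{prop_lip}, bound the denominator via $\EX_{u\sim\mu}[d(u,B)]\ge\tfrac14$ together with $\mu(B)\ge\tfrac12$ and \cref{prop_goodsign}, then invoke \cref{lmm_l1} with \cref{prop_separation}. The only cosmetic difference is that the paper derives $\EX_u[d(u,B)]\ge\tfrac14$ by routing through the center $o$ (i.e., $d(u,v)\le d(u,o)+d(o,v)$ and $d(v,o)\le d(v,B)+\tfrac14$), whereas you use the equivalent diameter bound $d(u,v)\le d(u,B)+d(v,B)+\tfrac12$; both arrive at the same constant $32$.
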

\begin{proof}
Denote $d(u,v)=\norm{\vect x_u-\vect x_v}^2_2$ and note that $d(\cdot,\cdot)$ is an $st$-separating semi-metric on $V$. Starting with the first constraint of \eqref{sdp}, we have
\begin{align*}
  1 &= \sum_{u,v\in V}\mu(u)\mu(v)\cdot d(u,v) 
    \leq \sum_{u,v\in V}\mu(u)\mu(v)\cdot\left(d(u,o)+d(o,v)\right) 
    = 2\sum_{v\in V}\mu(v)\cdot d(v,o) 
    \\
    &\leq  2\sum_{v\in V}\mu(v)\cdot\left(d(v,B)+\frac{1}{4}\right)
    = 2\sum_{v\in V}\mu(v)\cdot d(v,B)+\frac{1}{2},
\end{align*}
where the inequality in the second line is by $d(v,o)\leq d(v,v')+d(v',o)\leq d(v,B)+\tfrac{1}{4}$, with $v'$ being the closest point to $v$ in $B$. Rearranging the above, we get
\begin{equation}\label{eq_ball}
\sum_{v\in V}\mu(v)\cdot d(v,B) = \sum_{v\notin B}\mu(v)\cdot d(v,B) \geq \frac{1}{4} 
\end{equation}
and therefore,
\begin{align}
  \nonumber 
  & \sum_{u,v\in V} \mu(u)\mu(v)\cdot\left|d(u,B)-d(v,B)\right| 
  \geq \sum_{u\in B,v\notin B}\mu(u)\mu(v)\cdot\left|d(u,B)-d(v,B)\right| 
  \\
  \label{eq_denseball}
  &= \sum_{u\in B,v\notin B}\mu(u)\mu(v)\cdot d(v,B) = \mu(B)\sum_{v\notin B}\mu(v)\cdot d(v,B) 
  \geq \frac{1}{8}, 
\end{align}
where the final inequality is by plugging \cref{eq_ball} and the hypothesis $\mu(B)\geq\frac{1}{2}$.

Use $d(\cdot,\cdot)$ and $B\subseteq V$ to define the map $f_B^\pm$ as in \cref{def_fsigma}. 
Then by \cref{prop_goodsign} and then \cref{eq_denseball},
\[
  \sum_{u,v\in V}\mu(u)\mu(v)\cdot\norm{f_B^\pm(u)-f_B^\pm(v)}_1
  \geq\frac12\sum_{u,v\in V}\mu(u)\mu(v)\cdot|d(u,B)-d(v,B)| \geq \frac{1}{16} .
\]
At the same time, by \cref{prop_lip}, for every $u,v\in V$
we have $\norm{f_B^\pm(u)-f_B^\pm(v)}_1\leq 2d(u,v)$ and hence,
\[
\sum_{u,v\in V}\fcap(u,v)\cdot\norm{f_B^\pm(u)-f_B^\pm(v)}_1 \leq
2\sum_{u,v\in V}\fcap(u,v)\cdot d(u,v) = 2\cdot\SDP .
\]
Together,
\[
\frac{\sum_{u,v\in V}\fcap(u,v)\cdot\norm{|f_B^\pm(u)-f_B^\pm(v)}_1}{\sum_{u,v\in V}\mu(u)\mu(v)\cdot\norm{f_B^\pm(u)-f_B^\pm(v)}_1} \leq 32\cdot \SDP,
 \]
and thus applying \cref{lmm_l1} to $f_B^\pm$ produces a cut $(S,\bar S)$ with sparsity $sp_G(S,\bar S)\leq 32\cdot \SDP\leq 32\cdot \OPT$. 
By \cref{prop_separation} $f_B^\pm$ is $st$-sandwiching, and hence \cref{lmm_l1} further asserts that $(S,\bar S)$ is an $st$-separating cut.
\end{proof}

\subsection{Case II: No dense ball}

\begin{lemma}\label{lmm_nodenseball}
Let $G(V,\fcap,\mu)$ be an instance of $st$-$\cproblem{ProductSparsestCut}$. Denote by $\SDP$ the optimum of \ref{sdp} and let $\{\vect x_v\}_{v\in V}$ be an optimal solution to it. 
Suppose $\Pr_{u,v\sim\mu}[\norm{\vect x_u-\vect x_v}_2^2>\frac{1}{4}]\geq\frac{1}{2}$, where $u,v$ are sampled independently from $\mu$. Then a cut of sparsity $O(\sqrt{\SDP})$ can be efficiently computed.
\end{lemma}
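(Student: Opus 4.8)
The plan is to round the SDP solution $\{\vect x_v\}$ to an $st$-sandwiching map $V\to\R$ and feed it to \cref{lmm_l1}. Translate so that $\vect x_s=\vect 0$, and put $L:=d(s,t)=\norm{\vect x_t}_2^2$, $\vect e:=\vect x_t/\norm{\vect x_t}_2$. Since $d$ is $st$-separating (\cref{defn:sep}), expanding $d(s,t)=d(s,v)+d(v,t)$ yields $\langle\vect x_v,\vect x_v-\vect x_t\rangle=0$, so all $\vect x_v$ lie on the sphere of diameter $[\vect x_s,\vect x_t]$, and likewise $\langle\vect x_v,\vect e\rangle=d(v,s)/\sqrt L$. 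Writing $\vect w_v$ for the component of $\vect x_v$ orthogonal to $\vect e$, it follows that
\[
 \norm{\vect w_v}_2^2=\frac{d(v,s)\,d(v,t)}{L},\qquad
 d(u,v)=\frac{(d(u,s)-d(v,s))^2}{L}+\norm{\vect w_u-\vect w_v}_2^2 ,
\]
and in particular $\vect w_s=\vect w_t=\vect 0$. By \cref{prop:stsep}, $d(v,s),d(v,t)\le L$, hence $\norm{\vect w_v}_2^2\le\min\{d(v,s),d(v,t)\}$; and since the $\mu\times\mu$-average of $d$ equals $1$ (the first SDP constraint) while its maximum is $L$, also $L\ge1$. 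Call $(u,v)$ \emph{far} if $d(u,v)>\tfrac14$; by the displayed identity a far pair is \emph{axially far} ($(d(u,s)-d(v,s))^2/L>\tfrac18$) or \emph{equatorially far} ($\norm{\vect w_u-\vect w_v}_2^2>\tfrac18$), and as far pairs carry $\mu\times\mu$-mass $\ge\tfrac12$ (the hypothesis), one of these two classes carries mass $\ge\tfrac14$. I would treat the two cases separately.

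If axially-far pairs carry mass $\ge\tfrac14$, use the map $v\mapsto d(v,s)$: it is $st$-sandwiching (values $0$ at $s$, $L$ at $t$, and $0\le d(v,s)\le L$ by \cref{prop:stsep}). Since $|d(u,s)-d(v,s)|\le d(u,v)$, the capacity term of \cref{lmm_l1} is at most $\SDP$, while each axially-far pair contributes more than $\sqrt{L/8}\ge\tfrac1{\sqrt8}$ to the demand term, which is thus $\ge\tfrac14\cdot\tfrac1{\sqrt8}$; so \cref{lmm_l1} outputs an $st$-separating cut of sparsity $O(\SDP)$, which is $O(\sqrt\SDP)$ because $\SDP\le\OPT\le1$.

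If instead equatorially-far pairs carry mass $\ge\tfrac14$, fix a large absolute constant $C$, draw a standard Gaussian $\tilde{\vect g}$ on $\vect e^\perp$, and use
\[
 f_{\tilde{\vect g}}(v):=\operatorname{clip}_{[0,L]}\!\Bigl(d(v,s)+\tfrac1C\langle\vect w_v,\tilde{\vect g}\rangle\Bigr),\qquad
 \operatorname{clip}_{[0,L]}(x):=\max\{0,\min\{L,x\}\} .
\]
Because $\vect w_s=\vect w_t=\vect 0$, this has $f_{\tilde{\vect g}}(s)=0$, $f_{\tilde{\vect g}}(t)=L$, and all values in $[0,L]$, so it is $st$-sandwiching and \cref{lmm_l1} yields an $st$-separating cut of sparsity at most $N(\tilde{\vect g})/M(\tilde{\vect g})$, where $N,M$ are its capacity and demand sums. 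Bounding $\Exp[N(\tilde{\vect g})]$ is routine: clipping is $1$-Lipschitz, so the increment is $\le|d(u,s)-d(v,s)|+\tfrac1C|\langle\vect w_u-\vect w_v,\tilde{\vect g}\rangle|$, and then $|d(u,s)-d(v,s)|\le d(u,v)$, $\Exp|\langle\vect w_u-\vect w_v,\tilde{\vect g}\rangle|=\sqrt{2/\pi}\,\norm{\vect w_u-\vect w_v}_2\le\sqrt{2/\pi}\sqrt{d(u,v)}$, Cauchy--Schwarz with $\sum_{u,v}\fcap(u,v)=1$, and $\SDP\le1$ give $\Exp[N(\tilde{\vect g})]=O(\sqrt\SDP)$.

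The hard step is $\Exp[M(\tilde{\vect g})]=\Omega(1)$: the clipping that makes $f_{\tilde{\vect g}}$ $st$-sandwiching can wipe out the signal near the terminals, and I must check that large $C$ confines this loss. Writing $\hat f$ for the pre-clip map, $|f_{\tilde{\vect g}}(u)-f_{\tilde{\vect g}}(v)|\ge|\hat f(u)-\hat f(v)|-\operatorname{cl}(u)-\operatorname{cl}(v)$ with $\operatorname{cl}(v):=(-\hat f(v))^{+}+(\hat f(v)-L)^{+}$. For the signal, $\langle\vect w_u-\vect w_v,\tilde{\vect g}\rangle$ is a centered Gaussian, so $\Exp|\hat f(u)-\hat f(v)|\ge\tfrac1C\sqrt{2/\pi}\,\norm{\vect w_u-\vect w_v}_2$, and summing over equatorially-far pairs bounds $\sum_{u,v}\mu(u)\mu(v)\Exp|\hat f(u)-\hat f(v)|$ below by $\tfrac1C\sqrt{2/\pi}\cdot\tfrac1{\sqrt8}\cdot\tfrac14$. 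For the loss, $\hat f(v)\sim\mathcal N(d(v,s),\norm{\vect w_v}_2^2/C^2)$, so the Gaussian mean-excess bound gives $\Exp[(-\hat f(v))^{+}]\le\tfrac{\norm{\vect w_v}_2}{C}\varphi\!\bigl(C\,d(v,s)/\norm{\vect w_v}_2\bigr)$; since $\norm{\vect w_v}_2\le\sqrt{d(v,s)}$ (hence $d(v,s)/\norm{\vect w_v}_2\ge\sqrt{d(v,s)}$), this is at most $\max_{x\ge0}\tfrac{\sqrt x}{C}\varphi(C\sqrt x)=\tfrac{e^{-1/2}}{C^2\sqrt{2\pi}}$, and symmetrically (using $\norm{\vect w_v}_2\le\sqrt{d(v,t)}$ and $d(v,s)-L=-d(v,t)$) the same bound holds for $\Exp[(\hat f(v)-L)^{+}]$; so $2\sum_v\mu(v)\Exp[\operatorname{cl}(v)]\le\tfrac{4e^{-1/2}}{C^2\sqrt{2\pi}}$. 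Hence $\Exp[M(\tilde{\vect g})]\ge\tfrac1C\sqrt{2/\pi}\cdot\tfrac1{4\sqrt8}-\tfrac{4e^{-1/2}}{C^2\sqrt{2\pi}}$, a positive absolute constant once $C$ is large enough. Finally, $\Exp[M(\tilde{\vect g})^2]=O(1)$ (by Cauchy--Schwarz and $\sum\mu(u)\mu(v)\norm{\vect w_u-\vect w_v}_2^2\le1$), so a second-moment (Paley--Zygmund) argument plus Markov on $N(\tilde{\vect g})$ shows that an absolute-constant fraction of samples $\tilde{\vect g}$ yields ratio $N(\tilde{\vect g})/M(\tilde{\vect g})=O(\sqrt\SDP)$; sampling polynomially many $\tilde{\vect g}$ and keeping the sparsest resulting cut thus gives, with high probability, an $st$-separating cut of sparsity $O(\sqrt\SDP)$. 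The capacity side needs no such care, since clipping only contracts distances.
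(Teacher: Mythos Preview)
Your argument is correct, and it takes a genuinely different route from the paper's proof.

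Both you and the paper decompose $\vect x_v$ into its component along the $st$-axis and its orthogonal (``equatorial'') part $\vect w_v$ (the paper's $\vect z_v$), add a scaled Gaussian projection of the equatorial part, and then force the result into an interval to make the map $st$-sandwiching. The hard step in both cases is showing that this forcing does not destroy the demand signal.

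The paper does not exploit the sphere identity $\norm{\vect w_v}_2^2=d(v,s)\,d(v,t)/L$ that you derive from the $st$-separating SDP constraint; it uses only the weaker $\norm{\vect z_v}_2\le T$. Consequently, after projecting it cannot clip directly into $[0,T]$: instead it clips to the wider interval $[-\tfrac13 T,\tfrac43 T]$ (where the probability of overshooting is uniformly small), and then ``flips'' the two margin intervals back into $[0,T]$ with a random scale $\alpha\in\{\tfrac13,1\}$, arguing pairwise that the flip preserves a constant fraction of each distance in expectation.

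Your key observation is that $\norm{\vect w_v}_2\le\sqrt{\min\{d(v,s),d(v,t)\}}$, so the Gaussian noise automatically shrinks near each terminal. This lets you clip directly to $[0,L]$ and bound the expected clipping loss at each vertex by $O(1/C^2)$, uniformly in $v$, via the one-line optimization $\max_{x\ge0}\tfrac{\sqrt x}{C}\varphi(C\sqrt x)=\Theta(1/C^2)$. Since the signal is $\Omega(1/C)$, a fixed large $C$ wins. This replaces the paper's clip-then-flip machinery with a single clip and a quantitative tail estimate.

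Two smaller differences: you introduce an upfront axial/equatorial case split and dispatch the axial case with the deterministic $1$-Lipschitz map $v\mapsto d(v,s)$ (getting $O(\SDP)$ there), whereas the paper treats both contributions at once inside the single map $f_{\vect g}^{(0)}$; and your ``base'' coordinate is $d(v,s)=y_vT$ rather than $y_v$ itself, which is the natural scale for exploiting the sphere geometry. Overall your approach buys a shorter rounding procedure by using more of the SDP's $st$-separating structure, at the cost of the extra case distinction and the explicit clipping-loss computation.
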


\begin{proof}
Let $m$ denote the dimension of the $SDP$ solution $\{\vect x_v\}_{v\in V}$. By rotation and translation, we may assume without loss of generality that $\vect x_s=\vect0\in\R^m$ and that $\vect x_t$ is in the direction of $\vect e_1\in\R^m$, the first vector in the standard unit basis. 
We treat the latter direction as a ``distinguished'' one, and for each $v\in V$ we write $\vect x_v=(y_v,\vect z_v)$, where $y_v\in\R$ is the first coordinate and $\vect z_v\in\R^{m-1}$ is the vector of the remaining coordinates. 
Under this notation, we have $y_s=0$ and $\vect z_s=\vect z_t=\vect0$,
and let us denote $T \eqdef y_t\ge 0$.
The following claim records some useful facts.
\begin{claim}\label{xyz_claim}
For all $u,v\in V$,
\begin{enumerate} \compactify
\renewcommand{\theenumi}{(\alph{enumi})}
\item \label{it:xyz_a}
$ \norm{\vect x_u-\vect x_v}_2
  \leq \abs{y_u-y_v} + \norm{\vect z_u-\vect z_v}_2
  \leq \sqrt2\norm{\vect x_u-\vect x_v}_2
$;

\item \label{it:xyz_b}
$\norm{\vect z_v}_2\leq T$; and
\item \label{it:xyz_c}
$y_v\in[0,T]$.
\end{enumerate}
\end{claim}
\begin{proof}
\begin{enumerate} \renewcommand{\theenumi}{(\alph{enumi})}
\item 
By definition, $\norm{\vect x_u-\vect x_v}_2^2=\left|y_u-y_v\right|^2+\norm{\vect z_u-\vect z_v}_2^2$.
Applying now the well-known inequality 
$a^2+b^2 \leq (a+b)^2$ for $a,b\ge0$,
gives the claimed lower bound.
For the claimed upper bound, apply similarly 
$\tfrac{a^2+b^2}{2} \geq (\tfrac{a+b}{2})^2$.
\item 
The last constraint in SDP~\eqref{sdp} implies 
$\norm{\vect x_v-\vect x_s}_2^2 \leq \norm{\vect x_t-\vect x_s}_2^2$. 
Plugging $\vect x_s=\vect0$ and $\vect x_t=(T,0,\ldots,0)$, we get $\norm{\vect x_v}_2\leq T$. 
Recalling that $\vect x_v=(y_v,\vect z_v)$,
we get $y_v^2 + \norm{\vect z_v}_2^2 = \norm{\vect x_v}_2^2 \leq T^2$.
\item The above proof of \cref{it:xyz_b} also shows that $|y_v|\leq T$, so we are left to show $y_v\geq0$. 
And, using SDP~\eqref{sdp} again yields 
$\abs{y_v-T}^2 \leq \norm{\vect x_v - \vect x_t}_2^2 \le \norm{\vect x_t-\vect x_s}_2^2 = T^2$, which implies $y_v\geq0$.
\end{enumerate}
\end{proof}

\paragraph{Step 0: Random projection.}
We now turn to the main part of the proof. We embed $\{\vect x_v\}_{v\in V}$ into $\R$ as follows. Let $\vect g\in\R^{m-1}$ be a random vector of independent standard Gaussians. We define $f_{\vect g}^{(0)}:V\rightarrow\R$ as
\[f_{\vect g}^{(0)}(v) = y_v + \tfrac{1}{6}\langle \vect z_v,\vect g\rangle \]

We begin by showing that $f_{\vect g}^{(0)}(s)$ approximately preserves, in expectation, the (non-squared) $\ell_2$-distances between the points.

\begin{claim}\label{f0_claim}
For all $u,v\in V$,
\[ 
  \tfrac{1}{16}\norm{\vect x_u-\vect x_v}_2 
  \leq \EX_{\mathbf g}\left|f_{\vect g}^{(0)}(u)-f_{\vect g}^{(0)}(v)\right| 
  \leq \sqrt2 \norm{\vect x_u-\vect x_v}_2 . 
\]
\end{claim}
\begin{proof}
By rotational symmetry of the Gaussian distribution, 
$f_{\vect g}^{(0)}(u)-f_{\vect g}^{(0)}(v) = \left(y_u-y_v\right) + \tfrac{1}{6}\langle \vect z_u-\vect z_v,\vect g \rangle$
is distributed like
$\left(y_u-y_v\right) + \tfrac{1}{6}\norm{\vect z_u-\vect z_v}_2\cdot N(0,1)$,
where $N(0,1)$ is a Gaussian distribution. 
Recalling that the first absolute moment of $N(0,1)$ is $\sqrt{{2}/{\pi}}$,
we get 
\begin{align*}
  \EX_{\mathbf g}\left|f_{\vect g}^{(0)}(u)-f_{\vect g}^{(0)}(v)\right| 
  \leq \left|y_u-y_v\right| + \tfrac{1}{6}\sqrt{\tfrac{2}{\pi}}\norm{\vect z_u-\vect z_v}_2 
  \leq \sqrt2 \norm{\vect x_u-\vect x_v}_2 ,
\end{align*}
where the final inequality is by \cref{xyz_claim}\ref{it:xyz_a} (noting that $\tfrac{1}{6}\sqrt{\tfrac{2}{\pi}}<1$).
In the other direction, with probability $\frac{1}{2}$ the terms $\left(y_u-y_v\right)$ and $\tfrac{1}{6}\norm{\vect z_u-\vect z_v}_2\cdot N(0,1)$ have the same sign, thus
\begin{align*}
  \EX_{\mathbf g}\left|f_{\vect g}^{(0)}(u)-f_{\vect g}^{(0)}(v)\right| 
  \geq \tfrac{1}{2}\Big( \left|y_u-y_v\right| + \tfrac{1}{6}\norm{\vect z_u-\vect z_v}_2\cdot\EX \left|N(0,1)\right| \Big) 
  \geq \tfrac{1}{12}\sqrt{\tfrac{2}{\pi}}\norm{\vect x_u-\vect x_v}_2 ,
\end{align*}
where again the final inequality is by \cref{xyz_claim}\ref{it:xyz_a}.
\end{proof}

\cref{f0_claim} is already sufficient to obtain a cut with sparsity $O(\sqrt{\SDP})$, but it is not guaranteed to be $st$-separating.
To resolve this, we reason as follows. 
Observe that regardless of $\vect g$, 
we have $f_{\vect g}^{(0)}(s)=0$ and $f_{\vect g}^{(0)}(t)=T$,
so if all the images of $f$ were guaranteed to lie in the interval $[0,T]$, 
then $f$ is $st$-sandwiching and
we could use \cref{lmm_l1} to produce an $st$-separating cut.
However, this is not necessarily the case, 
and the remainder of this proof overcomes this issue 
by manipulating $f_{\vect g}^{(0)}$ in two steps: 
The first step ``clips'' $f_{\vect g}^{(0)}$ into a slightly bigger interval 
$[-\frac{1}{3}T,\frac{4}{3}T]$, which has additional $T/3$ margin in each side,
and the second step ``flips'' these margin areas back into $[0,T]$. 
Since these manipulations do not affect $f_{\vect g}^{(0)}(s)=0$ and $f_{\vect g}^{(0)}(t)=T$,
our challenge is to preserve the original $\ell_2$-distances.

\paragraph{Step 1: Clipping.}
We define $f_{\vect g}^{(1)}:V\rightarrow\R$ as the clipping of $f_{\vect g}^{(0)}$ into the interval $[-\frac{1}{3}T,\frac{4}{3}T]$. Formally,
\[ 
  f_{\vect g}^{(1)}(v) = 
  \begin{cases} 
    \frac{4}{3}T &\mbox{if $f_{\vect g}^{(0)}(v)>\frac{4}{3}T$;} \\
    f_{\vect g}^{(0)}(v) & \mbox{if $f_{\vect g}^{(0)}(v)\in[-\frac{1}{3}T,\frac{4}{3}T]$;} \\
    -\frac{1}{3}T & \mbox{if $f_{\vect g}^{(0)}(v)<-\frac{1}{3}T$.}
  \end{cases}
\]

\begin{claim}\label{clipping_claim_upper}
For all $u,v\in V$, $|f_{\vect g}^{(1)}(u)-f_{\vect g}^{(1)}(v)|\leq|f_{\vect g}^{(0)}(u)-f_{\vect g}^{(0)}(v)|$.
\end{claim}
\begin{proof}
It is straightforward that the clipping operation may only decrease distances.
\end{proof}

\begin{claim}\label{clipping_claim_lower}
For $u,v\in V$, define the following three events:
\begin{itemize} \compactify
\item $\mathcal{A}_1 \eqdef \aset{ f_{\vect g}^{(0)}(v)\in [-\frac{1}{3}T,\frac{4}{3}T] }$,
\item $\mathcal{A}_2 \eqdef \aset{ f_{\vect g}^{(0)}(u)\in [-\frac{1}{3}T,\frac{4}{3}T] }$,
\item $\mathcal{A}_3 \eqdef \aset{ |f_{\vect g}^{(0)}(u)-f_{\vect g}^{(0)}(v)|\geq 
\tfrac{1}{6}\norm{x_u-x_v}_2 }$.
\end{itemize}
Let $\ell_{uv}$ be an indicator random variable for their intersection.
Then 
$ \EX_{\vect g}[\ell_{uv}] 
  = \Pr[\mathcal{A}_1\cap\mathcal{A}_2\cap\mathcal{A}_3]
  \geq\frac{1}{20}
$.
\end{claim}

\begin{proof}
First, we claim that $\Pr[\mathcal{A}_1]>\frac{19}{20}$.
Indeed, $f_{\vect g}^{(0)}(v) = y_v + \tfrac{1}{6}\langle \vect z_v,\vect g\rangle$ is distributed like $y_v + \tfrac{1}{6}\norm{\vect z_v}_2g_v$ where $g_v\sim N(0,1)$. 
The Gaussian $g_v$ has probability $>\frac{19}{20}$ to be inside 
the interval $[-2,2]$. 
By \cref{xyz_claim} we have
$\left|y_v\right|\in[0,T]$ and $\norm{\vect z_v}_2\leq T$,
which imply event $\mathcal{A}_1$.

Second, we claim that $\Pr[\mathcal{A}_2]>\frac{19}{20}$.
Indeed, the argument is the same argument as for $\mathcal{A}_1$.

Third, we claim that $\Pr[\mathcal{A}_3]>\frac{3}{20}$.
Indeed, $\left|f_{\vect g}^{(0)}(u)-f_{\vect g}^{(0)}(v)\right|$ is distributed like $\left|\left(y_u-y_v\right) + \tfrac{1}{6}\norm{\vect z_u-\vect z_v}_2g_{uv} \right|$ for $g_{uv}\sim N(0,1)$. 
The Gaussian $g_{uv}$ has probability $>\frac{3}{20}$ to be at least one standard deviation away from its mean, in the direction that agrees with the sign of $y_u-y_v$. 
In that case, 
\[ 
  \left|f_{\vect g}^{(0)}(u)-f_{\vect g}^{(0)}(v)\right|\geq
  \left|y_u-y_v\right|+\tfrac{1}{6}\norm{\vect z_u-\vect z_v}_2
  \geq \tfrac{1}{6}\norm{\vect x_u-\vect x_v}_2 , 
\]
where the second inequality is by \cref{xyz_claim}\ref{it:xyz_a}. 

Finally, a union bound now implies $\EX[\ell_{uv}] = \Pr[\mathcal{A}_1\cap\mathcal{A}_2\cap\mathcal{A}_3]\geq\frac{1}{20}$. 
\end{proof}

For every $u,v\in V$,
if both events $\mathcal A_1$ and $\mathcal A_2$ occur, then the clipping operation has no effect on $u$ and $v$, i.e.~$f_{\vect g}^{(1)}(u)=f_{\vect g}^{(0)}(u)$ and $f_{\vect g}^{(1)}(v)=f_{\vect g}^{(0)}(v)$. If furthermore event $\mathcal A_3$ occurs, then we have
\[ |f_{\vect g}^{(1)}(u)-f_{\vect g}^{(1)}(v)|=|f_{\vect g}^{(0)}(u)-f_{\vect g}^{(0)}(v)| 
   \geq \tfrac{1}{6}\norm{\vect x_u-\vect x_v}_2.  \]
This implies that for all realizations of $\vect g$
(in particular without assuming whether events 
$\mathcal A_1,\mathcal A_2,\mathcal A_3$ hold or not)
\begin{equation} \label{eq_ell}
  |f_{\vect g}^{(1)}(u)-f_{\vect g}^{(1)}(v)| \geq \tfrac{1}{6}\ell_{uv}\norm{\vect x_u-\vect x_v}_2 .
\end{equation}

\paragraph{Step 1a: Fixing a function.} 
We now aim to fix a function $f_{\vect g}^{(1)}$ (i.e., a realization of $\vect g$)
and use it in the remainder of the algorithm.
We start with arguing (non-constructively) that a good realization exists,
and will later employ an additional idea to refine it into an efficient algorithm.
Using \cref{clipping_claim_upper,f0_claim}, 
we get by linearity of expectation that
\begin{equation} \label{eq_numerator_expectation}
  \EX_{\vect g}\left[ \sum_{u,v\in V} \fcap(u,v)\cdot\left|f_{\vect g}^{(1)}(u)-f_{\vect g}^{(1)}(v)\right|\right]
  \leq
  \sqrt2 \sum_{u,v\in V} \fcap(u,v) \norm{\vect x_u - \vect x_v}_2 .
\end{equation}
At the same time, using \cref{clipping_claim_lower} and \cref{eq_ell},
\begin{equation*} \label{eq_denominator_expectation}
\EX_{\vect g}\left[\sum_{u,v\in V} \mu(u)\mu(v)\cdot \left|f_{\vect g}^{(1)}(u)-f_{\vect g}^{(1)}(v)\right|\right] 
\geq
\frac{1}{120}\sum_{u,v\in V} \mu(u)\mu(v)\cdot\norm{\vect x_u-\vect x_v}_2.
\end{equation*}
Combining these two and applying an averaging argument, 
there must exist a realization of $\vect g$ such that 
\begin{equation*}\label{maineq_existence}
  \frac
  { \sum_{u,v\in V} \fcap(u,v) \cdot\left|f_{\vect g}^{(1)}(u)-f_{\vect g}^{(1)}(v)\right|}
  {\sum_{u,v\in V} \mu(u)\mu(v) \cdot\left|f_{\vect g}^{(1)}(u)-f_{\vect g}^{(1)}(v)\right|}
  \leq 120\sqrt2 \cdot 
  \frac
  { \sum_{u,v\in V} \fcap(u,v) \cdot\norm{\vect x_u - \vect x_v}_2}
  {\sum_{u,v\in V} \mu(u)\mu(v) \cdot\norm{\vect x_u - \vect x_v}_2}.
\end{equation*}

Next, we refine this analysis into an efficient method for finding 
a realization of $\vect g$ that satisfies a similar inequality. 
We will need the following simple observation.

\begin{lemma}\label{conclemma}
Let $Z$ be a random variable taking values in the range $[0,M]$,
and let $\mu \leq \EX[Z]$.
Then $\Pr[Z > \frac{1}{2}\mu ] \geq \frac{\mu}{2 M}$.
\end{lemma}
\begin{proof}
Denote $p=\Pr[Z > \frac{1}{2}\mu]$.
Then 
$
  \mu
  \leq \EX[Z]
  \leq p\cdot M + (1-p)\cdot \frac{1}{2}\mu 
  \leq p\cdot M + \frac{1}{2}\mu
$,
which yields the lemma by simple manipulation.
\end{proof}

We now apply \cref{conclemma} to the random variable 
$Z \eqdef \sum_{u,v\in V} \mu(u)\mu(v)\cdot \ell_{uv}\norm{\vect x_u-\vect x_v}_2 $.
Observe that $Z$ is always bounded by 
$M \eqdef \sum_{u,v\in V} \mu(u)\mu(v)\cdot \norm{\vect x_u-\vect x_v}_2 $,
and since by \cref{clipping_claim_lower} its expectation is 
$\EX[Z] \ge \frac{1}{20} M$,
we get 
$
  \Pr[ Z > \frac{1}{40} M ] 
  \geq \frac{1}{40}
$.
Plugging \cref{eq_ell} into the definition of $Z$, we arrive at
\[
  \Pr\left[ \sum_{u,v\in V} \mu(u)\mu(v)\cdot |f_{\vect g}^{(1)}(u)-f_{\vect g}^{(1)}(v)| 
            > \frac{1}{240} M
     \right]
  \geq 
  \frac{1}{40}.
\]
At the same time, using \cref{eq_numerator_expectation} 
and applying Markov's inequality,
\[
  \Pr\left[
     \sum_{u,v\in V} \fcap(u,v)\cdot\left|f_{\vect g}^{(1)}(u)-f_{\vect g}^{(1)}(v)\right|
     \geq
     80\sqrt2 \sum_{u,v\in V} \fcap(u,v) \norm{\vect x_u - \vect x_v}_2
  \right]
  \leq \frac{1}{80}.
\]
Putting the last two inequalities together, 
both events hold with probability at least $\frac{1}{80}$ 
(which can be amplified by independent repetitions),
in which case we find a realization of $\vect g$ satisfying
\begin{equation}\label{maineq}
  \frac
  { \sum_{u,v\in V} \fcap(u,v) \cdot\left|f_{\vect g}^{(1)}(u)-f_{\vect g}^{(1)}(v)\right|}
  {\sum_{u,v\in V} \mu(u)\mu(v) \cdot\left|f_{\vect g}^{(1)}(u)-f_{\vect g}^{(1)}(v)\right|}
  \leq 240 \cdot 80 \cdot \sqrt2 \cdot 
  \frac
  { \sum_{u,v\in V} \fcap(u,v) \cdot\norm{\vect x_u - \vect x_v}_2}
  {\sum_{u,v\in V} \mu(u)\mu(v) \cdot\norm{\vect x_u - \vect x_v}_2} .
\end{equation}
From now on we fix such $\vect g$ and the corresponding map $f_{\vect g}^{(1)}$.

\paragraph{Step 2: Flipping.}
Recall that our current function $f_{\vect g}^{(1)}$ is confined to the interval $[-\frac{1}{3}T,\frac{4}{3}T]$.
In order to confine it to $[0,T]$, 
we eliminate the margin intervals $[-\frac{1}{3}T,0]$ and $[T,\frac{4}{3}T]$ 
by ``flipping'' (or rather, ``reflecting'') them into the main interval $[0,T]$, 
while also ``shrinking'' them by an appropriate factor. 
Formally, for $\alpha\in[0,1]$, define $f^{(2)}_\alpha:V\rightarrow\R$ by
\[ 
  f^{(2)}_\alpha(v) = 
  \begin{cases} 
    T-\alpha\left(f_{\vect g}^{(1)}(v)-T\right) &\mbox{if $f_{\vect g}^{(1)}(v)>T$}; \\
    f_{\vect g}^{(1)}(v) & \mbox{if $f_{\vect g}^{(1)}(v)\in[0,T]$}; \\
    -\alpha\cdot f_{\vect g}^{(1)}(v) & \mbox{if $f_{\vect g}^{(1)}(v)<0$}.
  \end{cases}  
\]

\begin{claim}\label{flipping_claim1}
Let $u,v\in V$. For all $\alpha\in[0,1]$ 
we have $|f^{(2)}_\alpha(u)-f^{(2)}_\alpha(v)|\leq|f_{\vect g}^{(1)}(u)-f_{\vect g}^{(1)}(v)|$.
\end{claim}

\begin{proof}
Observe that the transition from $f_{\vect g}^{(1)}$ to $f^{(2)}_\alpha$ may only decrease distances.
\end{proof}

\begin{claim}\label{flipping_claim2}
Let $u,v\in V$, and consider a uniformly random $\alpha\in\{\tfrac{1}{3},1\}$. 
Then
\[
  \EX_{\alpha\in\aset{1/3,1}} \abs{f^{(2)}_\alpha(u)-f^{(2)}_\alpha(v)} 
  \geq \tfrac16 \abs{f_{\vect g}^{(1)}(u)-f_{\vect g}^{(1)}(v)}.
\]
\end{claim}
\begin{proof}
Suppose without loss of generality that 
$f_{\vect g}^{(1)}(v) < f_{\vect g}^{(1)}(u)$. 
Consider separately the following cases:

\begin{itemize} \compactify

\item Both $f_{\vect g}^{(1)}(v),f_{\vect g}^{(1)}(u)\in[0,T]$. 
In this case, $f_\alpha^{(2)}(u)=f_{\vect g}^{(1)}(u)$ and $f_\alpha^{(2)}(v)=f_{\vect g}^{(1)}(v)$,
and the claim holds.

\item Both $f_{\vect g}^{(1)}(v),f_{\vect g}^{(1)}(u)\in[T,\frac{4}{3}T]$. 
Then $|f_\alpha^{(2)}(u)-f_\alpha^{(2)}(v)|=\alpha|f_{\vect g}^{(1)}(u)-f_{\vect g}^{(1)}(v)|$, 
and the claim holds.

\item Both $f_{\vect g}^{(1)}(v),f_{\vect g}^{(1)}(u)\in[-\tfrac13 T,0]$. 
This case is symmetric to the previous one.

\item $f_{\vect g}^{(1)}(v)\in[-\tfrac13 T,0]$ and $f_{\vect g}^{(1)}(u)\in[T,\frac{4}{3}T]$. 
Then for all $\alpha\in[0,1]$, 
we have $f_\alpha^{(2)}(u) - f_\alpha^{(2)}(v) \geq \frac{2}{3}T - \frac{1}{3}T = \tfrac13 T$, while $\abs{f_{\vect g}^{(1)}(u)-f_{\vect g}^{(1)}(v)} \leq\frac{5}{3}T$, 
and the claim follows.

\item $f_{\vect g}^{(1)}(v)\in[0,T]$ and $f_{\vect g}^{(1)}(u)\in[T,\frac{4}{3}T]$. 
Here we handle two sub-cases, depending on the size of the flipped region 
relative to $L\eqdef f_{\vect g}^{(1)}(u) - f_{\vect g}^{(1)}(v) > 0$.
\begin{itemize}
\item Assume $f_{\vect g}^{(1)}(u) -T \leq \tfrac12 L$.
Then for $\alpha=\frac13$ we have
$f_\alpha^{(2)}(u) - f_\alpha^{(2)}(v) \geq \tfrac12 L - \alpha\cdot \tfrac12 L =\tfrac13 L$.

\item Otherwise, $f_{\vect g}^{(1)}(u) -T > \tfrac12 L$.
Then the possible images of $u$ under the two different 
$\alpha\in\aset{\tfrac13,1}$ are ``far'' apart, 
namely, $f_{1/3}^{(2)}(u) - f_1^{(2)}(u) \geq \tfrac23 \cdot \tfrac12 L = \tfrac13 L$.
Hence, under a uniformly random $\alpha\in\aset{\tfrac13,1}$, the expected distance 
between the image of $u$ and any fixed point is at least $\tfrac16 L$,
and the image of $v$ is indeed fixed regardless of $\alpha$ 
to be $f_\alpha^{(2)}(v) = f_{\vect g}^{(1)}(v)$.
\end{itemize}
We see that in both sub-cases 
$\EX_\alpha \abs{ f_\alpha^{(2)}(u) - f_\alpha^{(2)}(v) } \ge \tfrac16 L$.

\item $f_{\vect g}^{(1)}(u)\in[0,T]$ and $f_{\vect g}^{(1)}(v)\in[-\frac{1}{3}T,0]$. 
This case is symmetric to the previous one.
\end{itemize}
\end{proof}

We proceed with the proof of \cref{lmm_nodenseball}. 
Applying \cref{flipping_claim2} to all $u,v\in V$, we get that
\[ 
  \EX_{\alpha\in\aset{1/3,1}} \sum_{u,v\in V} \mu(u)\mu(v)\cdot \left|f_\alpha^{(2)}(u)-f_\alpha^{(2)}(v)\right| 
  \geq \tfrac16 \sum_{u,v\in V} \mu(u)\mu(v)\cdot\left|f_{\vect g}^{(1)}(u)-f_{\vect g}^{(1)}(v)\right|,
\]
and we can fix $\alpha\in\aset{\tfrac13,1}$ that attains this inequality.
For the same value of $\alpha$, we have by \cref{flipping_claim1} that also
\[ 
  \sum_{u,v\in V} \fcap(u,v) \cdot\left|f_\alpha^{(2)}(u)-f_\alpha^{(2)}(v)\right| 
  \leq \sum_{u,v\in V} \fcap(u,v) \cdot\left|f_{\vect g}^{(1)}(u)-f_{\vect g}^{(1)}(v)\right|. \]
Putting these together with \cref{maineq}, we get
\[
  \frac
  { \sum_{u,v\in V} \fcap(u,v) \cdot\left|f_\alpha^{(2)}(u)-f_\alpha^{(2)}(v)\right|}
  {\sum_{u,v\in V} \mu(u)\mu(v) \cdot\left|f_\alpha^{(2)}(u)-f_\alpha^{(2)}(v)\right|}
  \leq 
  O(1) \cdot 
  \frac 
  { \sum_{u,v\in V} \fcap(u,v) \cdot\norm{\vect x_u - \vect x_v}_2}
  {\sum_{u,v\in V} \mu(u)\mu(v) \cdot\norm{\vect x_u - \vect x_v}_2}.
\]
We now bound the right-hand side.
For the numerator, Jensen's inequality yields
\[
   \sum_{u,v\in V} \fcap(u,v) \cdot\norm{x_u-x_v}_2
  \leq \sqrt{ \sum_{u,v\in V} \fcap(u,v) \cdot\norm{x_u-x_v}_2^2} 
  \leq \sqrt{\SDP}.
\]
For the denominator, recall our hypothesis, which can be written as
$\Pr_{u,v\sim\mu}\left[\norm{x_u-x_v}_2 > \frac12\right] \geq \frac{1}{2}$,
and implies that 
$\EX_{u,v\sim\mu} \norm{\vect x_u - \vect x_v}_2 \geq \tfrac14$.
Putting these together gives
\[
  \frac
  { \sum_{u,v\in V} \fcap(u,v) \cdot\left|f_\alpha^{(2)}(u)-f_\alpha^{(2)}(v)\right|}
  {\sum_{u,v\in V} \mu(u)\mu(v) \cdot\left|f_\alpha^{(2)}(u)-f_\alpha^{(2)}(v)\right|}
  \leq 
  O(1)\cdot\frac{\sqrt{\SDP}}{1/4} = O(\sqrt{\SDP}).
\]
Now applying \cref{lmm_l1} to $f_\alpha^{(2)}$ produces a cut of sparsity $O(\sqrt{\SDP})$. 
Moreover, $f_\alpha^{(2)}$ is confined to the interval $[0,T]$,
while $f_\alpha^{(2)}(s)=f_{\vect g}^{(1)}(s)=0$ and $f_\alpha^{(2)}(t)=f_{\vect g}^{(1)}(t)=T$, 
hence \cref{lmm_l1} ensures the cut is $st$-separating,
and this completes the proof of \cref{lmm_nodenseball}.
\end{proof}

\subsection{Proof of \protect{\cref{thm_cheeger}}}

Let $G=(V,\fcap,\mu)$ be an instance of $st$-$\cproblem{ProductSparsestCut}$ with optimum $\OPT$. Set up and solve the semi-definite program \eqref{sdp}. 
Let $\SDP$ be the optimum and $\{\vect x_v\}_{v\in V}$ a solution that attains it. 
Apply \cref{lmm_cases} to the semi-metric given by $d(u,v)=\norm{\vect x_u-\vect x_v}_2^2$. 
If the first case in \cref{lmm_cases} holds, 
use \cref{lmm_denseball} to compute a cut with sparsity $O(\SDP)$. 
Otherwise, the second case of \cref{lmm_cases} must hold, 
and then use \cref{lmm_nodenseball} to compute a cut of sparsity 
$O(\sqrt{\SDP})$. Since $\SDP\leq\OPT$, \cref{thm_cheeger} follows. 
\qed

\section{A divide-and-conquer approach for product demands}
\label{app:iterative}

We now present an algorithm for $st$-$\cproblem{ProductSparsestCut}$, 
which essentially reduces the problem to its non-$st$ version
with only a constant factor loss in the approximation ratio.
This algorithm follows the well-known divide-and-conquer approach,
carefully adapted to the requirement that $s$ and $t$ are separated,
for example, it is initialized via a minimum $st$-cut computation.
This result was obtained in collaboration with Alexandr Andoni,
and we thank him for his permission to include this material.

For simplicity, we state and prove the case of uniform demands.
The theorem immediately extends to product demands, 
i.e., reduces $st$-$\cproblem{ProductSparsestCut}$ 
to $\cproblem{ProductSparsestCut}$, 
and the same bounds on the approximation ratio $\rho(n)$ are known 
for this case.

\begin{theorem}\label{thm_uniform}
Suppose $\cproblem{UniformSparsestCut}$ admits a polynomial-time approximation
within factor $\rho(n)$.
Then $st$-$\cproblem{UniformSparsestCut}$ admits a polynomial-time approximation within factor $O(\rho(n))$.
\end{theorem}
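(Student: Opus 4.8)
The plan is to reduce $st$-$\cproblem{UniformSparsestCut}$ to the ordinary (non-$st$) problem by a recursive partitioning scheme that loses only a constant factor. Fix throughout an optimal $st$-separating cut $(S^*,\bar S^*)$, say $s\in S^*$ and $t\in\bar S^*$; after the normalization $C=D=1$ the sparsity of a cut $(S,\bar S)$ equals, up to an absolute constant, $\card V^2\,\fcap(S,\bar S)/(\card S\cdot\card{\bar S})$. Two elementary observations drive the argument. First, $(S^*,\bar S^*)$ is in particular an ordinary cut, so the non-$st$ sparsest-cut optimum of $G$ is at most $\OPT$; more generally, for any $U\subseteq V$ with $s,t\in U$ the restriction $(S^*\cap U,\bar S^*\cap U)$ is an $st$-separating cut of the induced instance $G[U]$, hence bounds its $st$-optimum and a fortiori its ordinary sparsest-cut optimum. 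Second, the minimum $st$-cut capacity $\kappa$ satisfies $\kappa\le\OPT\le \card V\cdot\kappa$ (the lower bound because every $st$-separating cut severs capacity at least $\kappa$; the upper bound by plugging the minimum $st$-cut itself into the sparsity formula), so it suffices to run the algorithm for the $O(\log\card V)$ dyadic guesses $\lambda$ in the interval $[\kappa,\card V\cdot\kappa]$ and return the sparsest cut found; I may therefore assume a guess $\lambda$ with $\OPT\le\lambda=O(\OPT)$ is available. This is also where the minimum-$st$-cut ``initialization'' enters: it is computed up front, both to obtain this range and to keep the minimum $st$-cut itself as one baseline candidate cut.

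Given $\lambda$, I would maintain an active set $U\ni s,t$ (initially $U=V$) together with a list of removed ``blocks.'' In each round I invoke the assumed $\rho(\card V)$-approximation for $\cproblem{UniformSparsestCut}$ on $G[U]$ and obtain a cut $(A,U\setminus A)$ of $G[U]$. If it is $st$-separating, stop and designate it the \emph{core} cut. Otherwise $s$ and $t$ lie on the same side; record the other side as a new block, replace $U$ by the side containing $s$ and $t$, and repeat. (The active set strictly shrinks each round, so there are at most $\card V$ rounds.) The minimum $st$-cut reappears as an alternative stopping rule: before invoking the black box I first check whether $G[U]$ is ``expander-like at scale $\lambda$,'' i.e.\ whether every cut of $G[U]$ has sparsity above $\lambda$; if so I stop instead with the minimum $st$-cut of $G[U]$ as the core cut, since in that regime no sparse cut of $G[U]$ — in particular none separating $s$ from $t$ — can exist, yet by the restriction observation the active set must by then be small and its minimum $st$-cut correspondingly cheap. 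The output is assembled from the core cut $(P,Q)$ (with $s\in P$, $t\in Q$) of the final active set and the recorded blocks $R_1,\dots,R_k$: extend $(P,Q)$ to a cut of all of $V$ by computing, in polynomial time, a minimum $st$-cut of $G$ subject to the constraint that $P$ lies on $s$'s side and $Q$ on $t$'s side, and output whichever of this cut and the baseline minimum $st$-cut is sparser.

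For the sparsity bound, the denominator $\card S\cdot\card{\bar S}\ge\card P\cdot\card Q$ is kept large because $P\cup Q$ is exactly the final active set, whose size is controlled by the termination/threshold argument; and the numerator $\fcap(S,\bar S)$ is kept small because each block $R_i$ was carved off by a cut that was genuinely sparse within its then-active set $U_i$, so the capacity it can contribute to the final cut — whether to the core cut or across to other blocks — is charged against the combination of the size bound $\sum_i\card{R_i}\le\card V$, the total capacity $C=1$, and the density thresholds in force when each $R_i$ was created. The main obstacle, and the step demanding the most care, is exactly this charging argument intertwined with the termination/threshold argument: the black-box sparsest-cut guarantee carries no balance guarantee, so a single invocation can strip off a large and expensive piece while barely shrinking the active set, and one must show that the scheme nonetheless makes steady enough progress and that the removed pieces are collectively affordable at the final assembly — which is precisely why the recursion is seeded and capped by minimum-$st$-cut computations rather than run purely on top of the $\rho(\card V)$-approximation. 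Making the progress bound and the charging bound reinforce each other, rather than each costing a factor of $\card V$ on its own, is the crux.
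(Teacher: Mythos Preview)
Your proposal identifies the right high-level theme --- combine a minimum $st$-cut with iterated calls to the non-$st$ oracle on shrinking subgraphs --- but it is not a proof: you explicitly flag the charging/termination step as ``the crux'' and leave it undone, and the scaffolding around it has real gaps.

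The most serious one is the reassembly. After obtaining a core cut $(P,Q)$ of the final active set $U$, you output a minimum $st$-cut of $G$ constrained to $P\subseteq S$, $Q\subseteq\bar S$, and want its sparsity to be $O(\rho)\cdot\OPT$. But the only candidate you can compare this constrained optimum against is $(S^*,\bar S^*)$ itself, and there is no reason $P\subseteq S^*$ and $Q\subseteq\bar S^*$: the black box gives you no control over which vertices land on which side. So the constrained minimum may be forced to have capacity far exceeding $\fcap(S^*,\bar S^*)$. Your denominator bound $\card S\cdot\card{\bar S}\ge\card P\cdot\card Q$ is likewise far too weak once $\card U\ll\card V$, which is exactly the regime your ``expander-like'' stopping rule is meant to handle. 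Relatedly, that stopping rule as stated --- ``check whether every cut of $G[U]$ has sparsity above $\lambda$'' --- is not a polynomial-time test; it is the problem we are trying to approximate. Replacing it by a test on the black box's output introduces a factor-$\rho$ slack between the two thresholds that then has to be absorbed by the very charging argument you have not carried out, and your claim that expansion at scale $\lambda$ forces $\card U$ to be small does not follow from the restriction observation alone.

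The paper's algorithm avoids all of this by never reassembling. It first removes $s$ together with the smaller side $S_0$ of a minimum $st$-cut, then repeatedly carves off the smaller side $C_i$ of a $\rho$-approximate sparsest cut in the remaining graph (which still contains $t$), and at every step \emph{records the actual cut of $G$} given by $S_0\cup C_1\cup\cdots\cup C_i$ versus its complement. No guess of $\OPT$ is needed; the \emph{analysis}, not the algorithm, picks the stopping index $i^*$ at which $\card{S_{i^*}}$ first reaches $\tfrac13\card{V_{\opt}}$, and bounds the recorded cut there via $\fcap(S_{i^*},\bar S_{i^*})\le\fcap(S_0,\bar S_0)+\sum_{i\le i^*}\fcap(C_i,\bar C_i)$ together with $\fcap(C_i,\bar C_i)\le O(\rho)\cdot\OPT\cdot\card{C_i}$, the latter holding because up to iteration $i^*$ less than $\tfrac13\card{V_{\opt}}$ vertices have been removed, so the restriction of $(V_{\opt},\bar V_{\opt})$ is still a near-optimal witness in each $G[V']$. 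The single iteration (if any) in which $t$ lands on the smaller side is handled by a separate recorded cut. Your peel-and-reassemble scheme would need a comparably tight coupling between what was peeled and the final cut; as written it does not have one.
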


The best approximation ratio known for $\cproblem{UniformSparsestCut}$ 
to date is $\rho=O(\sqrt{\log n})$ \cite{ARV09}.
Our result actually extends also to graphs excluding a fixed minor, 
for which the known approximation is $\rho=O(1)$ \cite{KPR93,Rao99,FT03,LS13,AGGNT13}.

\paragraph{Remark.} 
It may seem that \cref{thm_uniform} can yield also a Cheeger-type approximation for $st$-$\cproblem{ProductSparsestCut}$ 
(and thus subsume \cref{thm_cheeger}), 
by replacing the $\rho(n)$-approximation with Trevisan's 
Cheeger-type approximation algorithm for $\cproblem{ProductSparsestCut}$.
However, the analysis of \cref{thm_uniform} does not carry through;
the divide-and-conquer algorithm applies the assumed algorithm  
(for $\cproblem{ProductSparsestCut}$) to various subgraphs of the input graph,
which are all of size at most $n$,
but a Cheeger-type approximation factor on these subgraphs 
depends on their expansion after normalizing their total capacity and demand. 
Concretely, an input graph $G$ may be an expander but contain 
a small non-expanding subgraph $G'$.
A Cheeger-type approximation for $G$ should yield an $O(1)$-approximation, 
but a Cheeger-type approximation for $G'$ is super-constant
and breaks the analysis of \cref{thm_uniform}. 
Nevertheless, it remains possible that our divide-and-conquer algorithm, 
possibly with minor tweaks, does provide a Cheeger-type approximation.

\subsection{The divide-and-conquer algorithm}

Our algorithm iteratively removes a piece from the current graph,
until ``exhausting'' all the entire graph. 
During its execution, the algorithm ``records'' a list of candidate cuts, all of which are $st$-separating, and eventually returns the best cut in the list.
The idea is that our analysis can determine the ``correct'' stopping point 
using information that is not available to the algorithm, like the size of the optimum cut.
The algorithm works as follows.

\begin{algorithm}
\begin{algorithmic}[1]
\label{alg1}
\STATE compute a minimum $st$-cut $(S_0,V\setminus S_0)$ in $G$;
let ${S_0}$ be the smaller side and $s\in S_0$
\STATE \textbf{record} the cut $(S_0,V\setminus S_0)$
\label{stp:recordS0}
\STATE set $V'\gets V\setminus S_0$;\; $S\gets S_0$
\WHILE {$\card{V'} \geq 2$}
\STATE compute a $\rho$-approximate sparsest cut $(C,\bar C)$ in $G[V']$;
let $C$ be the smaller side \label{stp:rhoapp}
\label{stp:cutC}
\IF {$t\notin C$} \label{stp:cond}
\STATE set $S\gets S\cup C$ and \textbf{record} the cut $(S,V\setminus S)$
\label{stp:recordS}
\ELSE 
\STATE set $T\gets C$ and \textbf{record} the cut $(T,V\setminus T)$
\label{stp:recordT}
\ENDIF
\STATE set $V'\gets V'\setminus C$
\ENDWHILE
\RETURN a recorded cut of minimum sparsity
\end{algorithmic}
\end{algorithm}

\begin{claim}
All recorded cuts (and thus also the output cut) are $st$-separating.
\end{claim}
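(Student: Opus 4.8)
The plan is to show by induction on the number of while-loop iterations that the algorithm maintains the invariant: $s \in S$, $t \notin S$, $V' \cap \{s,t\} = \emptyset$, and that every recorded cut separates $s$ from $t$. First I would verify the base case at \cref{stp:recordS0}: the cut $(S_0, V\setminus S_0)$ is a minimum $st$-cut, hence $st$-separating by definition; moreover after \cref{stp:recordS0} we have $s \in S_0 = S$ and $t \in V \setminus S_0 = V'$, so $t \notin S$ and $s \notin V'$.

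Next I would handle the inductive step. Assume at the start of an iteration that $s \in S$, $t \in V'$, and the cut $(C,\bar C)$ is computed in $G[V']$ with $C$ the smaller side. There are two branches. If $t \notin C$ (\crefrange{stp:cond}{stp:recordS}), then since $C \subseteq V'$ and $V' \cap S = \emptyset$ (as $s \in S$ and $V'$ is the complement of the removed part), setting $S \gets S \cup C$ keeps $s \in S$ and $t \notin S$ (because $t \notin C$ and $t \notin S$ previously), so $(S, V\setminus S)$ is $st$-separating. If instead $t \in C$ (\crefrange{}{stp:recordT}), then because $C$ is the \emph{smaller} side of the cut in $G[V']$ and $s \notin V'$, we trivially have $s \notin C$, so setting $T \gets C$ and recording $(T, V\setminus T)$ gives an $st$-separating cut with $t \in T$, $s \notin T$. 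In either case, after $V' \gets V' \setminus C$ the set $V'$ still avoids $s$ (it always did) and, since in the first branch $t \notin C$, it continues to contain $t$ — unless we took the second branch, in which case $t$ was removed from $V'$, but then the loop will still terminate correctly since the remaining recorded cuts are unaffected.

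The one subtlety — and the main thing to get right — is the interaction between the two branches: once the second branch ($t \in C$) is taken, $t$ leaves $V'$, and on subsequent iterations $C \subseteq V'$ automatically excludes both $s$ and $t$, so any further cut recorded via \crefrange{}{stp:recordS} still has $s \in S$ and $t \notin S$ (as $C$ contributes nothing containing $t$). I would note that $T$ is assigned (not accumulated) so it is recorded at most once per execution of that branch, and in all cases the recorded set either contains $s$ but not $t$, or contains $t$ but not $s$. Hence every recorded cut is $st$-separating, and in particular the minimum-sparsity recorded cut returned by the algorithm is $st$-separating. I do not expect any genuine obstacle here; the proof is a routine bookkeeping argument, and the only care needed is to track which of $s,t$ lies on which side through both branches and through the removal $V' \gets V' \setminus C$.
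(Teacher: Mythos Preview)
Your argument is correct and follows essentially the same route as the paper: verify that the initial minimum $st$-cut is separating, then track the invariant $s\in S$, $t\notin S$ through the loop, observing that step~\ref{stp:recordT} fires at most once and that there $T=C\subseteq V'\subseteq V\setminus S_0$ contains $t$ but not $s$. One small slip worth fixing: the invariant you announce, $V'\cap\{s,t\}=\emptyset$, is false at initialization (you yourself note $t\in V'$ right afterward); the invariant you actually need and use is only $s\notin V'$ together with $s\in S$, $t\notin S$, which is exactly what the paper tracks.
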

\begin{proof}
The cut recorded in step~\ref{stp:recordS0} is clearly $st$-separating. 
Inspecting the iterations of the main loop, 
we see they maintain that $s\in S_0\subseteq S$ and $t\notin S$,
and thus the cut recorded in step~\ref{stp:recordS} must be $st$-separating. 
Finally, when step~\ref{stp:recordT} is executed, which happens at most once,
$T=C\subseteq V\setminus S_0$ contains $t$ but not $s$, 
hence the recorded cut is $st$-separating. 
\end{proof}

\paragraph{Notation.}
Throughout the analysis, it will be convenient to work with a slightly different definition of cut sparsity, 
\begin{equation} \label{newfsp}
  \fsp(S,\bar S) = \frac{\fcap(S,\bar S)}{\min\{|S|,|\bar S|\}}.
\end{equation}
It is well-known (and easy to verify) that up to a factor of 2 and appropriate scaling, this quantity is equivalent to the one given in \cref{section_intro}.
In particular, a $\rho$-approximation under one definition 
is a $2\rho$-approximation under the other definition.

For the rest of the analysis, fix a sparsest $st$-separating cut in $G$, namely, one that minimizes \cref{newfsp}, denoted $(V_{\opt},\bar V_{\opt})$,
with $V_{\opt}$ being the smaller side, and let $\OPT=\fsp(V_{\opt},\bar V_{\opt})$.
We proceed by considering three cases, 
which correspond to the three steps (\ref{stp:recordS0}, \ref{stp:recordS}, and \ref{stp:recordT}) where the algorithm records a cut,
and can be viewed as different ``stopping points'' for the main loop.

\paragraph{Case 1: When $S_0$ is good enough.}

Suppose $\card{S_0}\geq\frac{1}{8} \card{V_{\opt}}$. 
Since the cut $(S_0,\bar S_0=V\setminus S_0)$ 
recorded in step~\ref{stp:recordS0} is a minimum $st$-cut, 
\[
  \fsp(S_0,\bar S_0) = \frac{\fcap(S_0,\bar S_0)}{|S_0|} 
  \leq \frac{\fcap(V_{\opt},\bar V_{\opt})}{|V_{\opt}|/8} 
  = 8\cdot \OPT.
\]
Thus, in this case our algorithm achieves a constant-factor approximation.

\paragraph{Further notation:}
\begin{itemize} \compactify
\item 
Denote by $(C_i,\bar C_i)$ the cut computed in iteration $i$ of step~\ref{stp:cutC}. 
Note that in this step $C_i\cup \bar C_i=V'$, rather than the entire $V$.
\item 
Let $S_i$ denote the value of $S$ at the end of iteration $i$ of the main loop.
Observe that $S_i$ is the disjoint union $S_0\cup C_1\cup C_2\cup\cdots\cup C_i$ minus the set $C_j$ containing $t$, if any. 
\item
Let $i^*\ge 0$ be the smallest such that $|S_{i^*}|\geq\frac{1}{3}|V_{\opt}|$. 
We assume henceforth that Case 1 does not hold, and thus $i^*\ge 1$.
\end{itemize}

\paragraph{Case 2: The ``standard'' case.}
We consider next what we call the standard case, 
where in the first $i^*$ iterations the condition in step~\ref{stp:cond} is met, which means that $t$ falls in the larger side of the cut $(C,\bar C)$.
In this case, $S_{i^*}=S_0\cup C_1\cup\cdots\cup C_{i^*}$.
The following two claims will be used to analyze the size and capacity of the cut produced after $i^*$ iterations.

\begin{claim}\label{clm_stdcase_dem}
$\min\{|S_{i^*}|,|V\setminus S_{i^*}|\} \geq \frac13 |S_{i^*}|$.
\end{claim}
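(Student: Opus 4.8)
The statement to prove is $\min\{|S_{i^*}|,|V\setminus S_{i^*}|\} \geq \tfrac13 |S_{i^*}|$, i.e.\ that the smaller side of the cut $(S_{i^*},V\setminus S_{i^*})$ is at least a third of $|S_{i^*}|$. This is trivial when $S_{i^*}$ itself is the smaller side, so the entire content is: when $V\setminus S_{i^*}$ is the smaller side, it still has at least $\tfrac13|S_{i^*}|$ vertices. The plan is to bound $|S_{i^*}|$ from above using the minimality of $i^*$, and to observe that $|V\setminus S_{i^*}|$ cannot be too small because $V_{\opt}$ (which has size $\ge |S_{i^*}|$ roughly, being the larger of the two ``sides'' we are comparing against) must straddle $S_{i^*}$ and its complement.

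\textbf{Key steps.} First, by the defining minimality of $i^*$, we have $|S_{i^*-1}| < \tfrac13 |V_{\opt}|$ (for $i^* \ge 1$, which holds since Case~1 is assumed not to hold). In the standard case $S_{i^*} = S_{i^*-1} \cup C_{i^*}$, and $C_{i^*}$ is the \emph{smaller} side of the cut $(C_{i^*},\bar C_{i^*})$ of $G[V']$, where $V' = V \setminus S_{i^*-1}$; hence $|C_{i^*}| \le \tfrac12 |V'| \le \tfrac12 |V| $, but more usefully $|C_{i^*}| \le \tfrac12|V \setminus S_{i^*-1}|$. Actually the clean bound I want is: $|S_{i^*}| = |S_{i^*-1}| + |C_{i^*}|$, and since $|S_{i^*-1}| < \tfrac13|V_{\opt}|$ while $|C_{i^*}| \le \tfrac12 |V\setminus S_{i^*-1}|$, I should combine these to get $|S_{i^*}|$ is at most something like $\tfrac23|V_{\opt}|$ or at most $\tfrac12|V| + (\text{small})$. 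Second, I use that $V_{\opt}$ is the smaller side of the optimal cut, so $|V_{\opt}| \le \tfrac12|V|$, hence $|V \setminus S_{i^*}| = |V| - |S_{i^*}| \ge |V| - |S_{i^*-1}| - |C_{i^*}|$, and I bound each subtracted term: $|S_{i^*-1}| < \tfrac13|V_{\opt}| \le \tfrac16|V|$ and $|C_{i^*}| \le \tfrac12(|V| - |S_{i^*-1}|) \le \tfrac12|V|$. Putting these together gives $|V\setminus S_{i^*}| \ge |V| - \tfrac16|V| - \tfrac12|V| = \tfrac13|V|$, while $|S_{i^*}| \le \tfrac23|V|$, so $|V\setminus S_{i^*}| \ge \tfrac12 |S_{i^*}| \ge \tfrac13|S_{i^*}|$. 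I will double-check the exact fractions when writing the final version, since the constants $\tfrac13$, $\tfrac18$, $\tfrac13$ appearing in the setup have to line up; the looser target $\tfrac13|S_{i^*}|$ gives comfortable slack.

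\textbf{Main obstacle.} The delicate point is not any hard inequality but making sure the right quantity is being bounded: $C_{i^*}$ is the smaller side of a cut \emph{within} $G[V']$, so $|C_{i^*}| \le \tfrac12|V'|$ where $V' = V\setminus S_{i^*-1}$, and one must be careful that $S_{i^*}$ might pick up almost half of everything outside $S_{i^*-1}$ in a single step — this is exactly why the bound on $|S_{i^*}|$ cannot be made much better than $\tfrac23|V|$ and why the statement is phrased with the constant $\tfrac13$ rather than $\tfrac12$. The other thing to be careful about is the edge case where $t \in C_{i^*}$: but in the standard case (Case~2) we have assumed the step~\ref{stp:cond} condition $t \notin C$ holds in all of the first $i^*$ iterations, so $S_{i^*}$ really is the clean union $S_0 \cup C_1 \cup \cdots \cup C_{i^*}$ with no set removed, and this assumption should be invoked explicitly at the start of the proof.
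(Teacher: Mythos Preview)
Your proposal is correct and follows essentially the same approach as the paper: bound $|S_{i^*-1}| < \tfrac13|V_{\opt}| \le \tfrac16|V|$ using minimality of $i^*$ and $|V_{\opt}| \le \tfrac12|V|$, bound $|C_{i^*}| \le \tfrac12|V|$ because $C_{i^*}$ is the smaller side of its cut, conclude $|S_{i^*}| < \tfrac23|V|$ and hence $|V\setminus S_{i^*}| > \tfrac13|V| \ge \tfrac13|S_{i^*}|$. The paper's proof is just the condensed version of your ``Key steps'' paragraph; your added remarks about the trivial case and about explicitly invoking the standard-case assumption $S_{i^*} = S_0 \cup C_1 \cup \cdots \cup C_{i^*}$ are sound and make the write-up slightly more careful than the paper's.
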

\begin{proof}
By definition of $i^*$ we have 
$|S_{i^*-1}|<\frac{1}{3}|V_{\opt}|\leq \frac16 |V|$. 
And since $C_{i^*}$ is the smaller side of some cut, $|C_{i^*}|\leq \frac{1}{2}|V|$. 
Together, $|S_{i^*}|=|S_{i^*-1}|+|C_{i^*}| < \frac23 |V|$, 
and we get $|V\setminus S_{i^*}| > \tfrac13 |V| \geq \frac13 |S_{i^*}|$, 
as required.
\end{proof}

\begin{claim} \label{clm_stdcase_aprx}
For all $i=1,\ldots,i^*$,\; 
$\fsp(C_i,\bar C_i) \leq \tfrac32\rho\cdot \OPT$.
(Note that $(C_i,\bar C_i)$ is a cut in the induced subgraph $G[V\setminus S_{i-1}]$, whereas $(V_{\opt},\bar V_{\opt})$ is a cut in the input graph $G$.)
\end{claim}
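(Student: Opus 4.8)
The plan is to relate the sparsity (in the new sense \eqref{newfsp}) of the computed cut $(C_i,\bar C_i)$ in the induced subgraph $G[V\setminus S_{i-1}]$ to $\OPT$ by exhibiting a cheap $st$-separating cut in that subgraph and invoking the $\rho$-approximation guarantee of step~\ref{stp:rhoapp}. The natural candidate is the restriction of the optimal cut $(V_{\opt},\bar V_{\opt})$ to the vertex set $V'=V\setminus S_{i-1}$, i.e.\ the cut $(V_{\opt}\cap V',\, \bar V_{\opt}\cap V')$ in $G[V']$. First I would check that this restricted cut is nontrivial, i.e.\ that both sides are nonempty: since we are in Case~2 and $i\le i^*$, we have $|S_{i-1}|<\tfrac13|V_{\opt}|$, so $S_{i-1}$ cannot swallow all of $V_{\opt}$ (nor of $\bar V_{\opt}$, which is even larger), and thus both $V_{\opt}\cap V'$ and $\bar V_{\opt}\cap V'$ are nonempty.

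Next I would bound the numerator and denominator of the restricted cut's sparsity. For the numerator, the capacity cut by $(V_{\opt}\cap V', \bar V_{\opt}\cap V')$ inside $G[V']$ is at most $\fcap(V_{\opt},\bar V_{\opt})$, since removing vertices only removes edges. For the denominator, I need $\min\{|V_{\opt}\cap V'|,\,|\bar V_{\opt}\cap V'|\}$ to be at least a constant fraction of $|V_{\opt}|$. Here is where the bound $|S_{i-1}|<\tfrac13|V_{\opt}|$ is used again: since $V_{\opt}$ is the smaller side, $|\bar V_{\opt}\cap V'|\ge |\bar V_{\opt}|-|S_{i-1}|\ge |V_{\opt}|-\tfrac13|V_{\opt}| = \tfrac23|V_{\opt}|$, while $|V_{\opt}\cap V'|\ge|V_{\opt}|-|S_{i-1}| > \tfrac23|V_{\opt}|$ as well; so the minimum is at least $\tfrac23|V_{\opt}|$. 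Hence the restricted cut has sparsity (in the sense \eqref{newfsp}, relative to $G[V']$) at most $\frac{\fcap(V_{\opt},\bar V_{\opt})}{\tfrac23|V_{\opt}|} = \tfrac32\cdot\OPT$.

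Finally, since step~\ref{stp:rhoapp} computes a $\rho$-approximate sparsest cut in $G[V']$, the computed cut $(C_i,\bar C_i)$ satisfies $\fsp(C_i,\bar C_i)\le\rho\cdot(\text{optimal sparsity in }G[V'])\le\rho\cdot\tfrac32\OPT$, which is the claim. The main thing to be careful about is the book-keeping between the two notions of sparsity and the fact that the approximation factor $\rho$ is stated for $\cproblem{UniformSparsestCut}$ under one normalization; but as noted in the paragraph on notation, switching notions costs only a factor of $2$, which has presumably already been folded into the way $\rho$ is used here, so no additional loss appears. The only genuine (if mild) obstacle is verifying non-triviality of the restricted cut and getting the constant $\tfrac23$ (hence $\tfrac32$) right from the defining inequality $|S_{i-1}|<\tfrac13|V_{\opt}|$; everything else is routine.
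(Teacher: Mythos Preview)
Your proposal is correct and follows essentially the same route as the paper: restrict $(V_{\opt},\bar V_{\opt})$ to $V'=V\setminus S_{i-1}$, use $|S_{i-1}|<\tfrac13|V_{\opt}|$ to bound the smaller side of the restricted cut below by $\tfrac23|V_{\opt}|$, note that the crossing capacity can only drop, and invoke the $\rho$-approximation guarantee. Your extra care about non-triviality of the restricted cut and the normalization remark are fine and do not depart from the paper's argument.
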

\begin{proof}
Fix $i$ and denote by $G_i$ the induced graph at the beginning of iteration $i$,
i.e., $G_i=G[V\setminus S_{i-1}]$. 
The cut $(V_{\opt},\bar V_{\opt})$ induces in $G_i$ some cut $(U,\bar U)$, 
where $U\subseteq V_{\opt}$ and $\bar U\subseteq\bar V_{\opt}$. 
Since $i\leq i^*$, earlier iterations (before $i$) have removed from the graph 
less than $\frac13 |V_{\opt}|$ vertices, and in particular
\begin{align} \label{eq_induced_cut_dem}
  \minn{|U|,|\bar U|}
  > \minn{|V_\opt|,|\bar V_\opt|} - \tfrac{1}{3}|V_{\opt}|
  = \tfrac23 |V_{\opt}|.
\end{align}
By definition of $(U,\bar U)$ we have 
$\fcap(U,\bar U)\leq\fcap(V_{\opt},\bar V_{\opt})$, and we get 
$\fsp(U,\bar U)\leq \tfrac32 \fsp(V_{\opt},\bar V_{\opt}) = \tfrac32 \OPT$. 
The claim now follows from the fact that $(U,\bar U)$ is one possible cut in $G_i$ and the approximation guarantee used in step~\ref{stp:rhoapp}.
\end{proof}

We can now complete the proof for this standard case 
by showing that the recorded cut $(S_{i^*},\bar S_{i^*})$ is sufficiently good.
Indeed, using \cref{clm_stdcase_dem}
\begin{equation} \label{eq_stdcase1}
  \fsp(S_{i^*},V\setminus S_{i^*}) 
  \leq \frac{\fcap(S_{i^*},V\setminus S_{i^*})} {|S_{i^*}|/3} 
  \leq 3\left( \frac{\fcap(S_0,\bar S_0)}{|S_{i^*}|} +    
     \frac{\sum_{i=1}^{i^*}\fcap(C_i,\bar C_i)}{|S_{i^*}|} \right) .
\end{equation}
To bound the first summand in \cref{eq_stdcase1}, recall that 
$\fcap(S_0,\bar S_0)\leq\fcap(V_{\opt},\bar V_{\opt})$ 
and $|S_{i^*}|\geq\frac{1}{3}|V_{\opt}|$. 
To bound the second summand in \cref{eq_stdcase1}, 
we use \cref{clm_stdcase_aprx} and get
\begin{equation*} 
  \frac{\sum_{i=1}^{i^*}\fcap(C_i,\bar C_i)}{|S_{i^*}|} <
  \frac{\sum_{i=1}^{i^*}\fcap(C_i,\bar C_i)}{\sum_{i=1}^{i^*}|C_i|} \leq
  \max_{i=1,\ldots,i^*}\frac{\fcap(C_i,\bar C_i)}{|C_i|} \leq 
  \tfrac32\rho\cdot\fsp(V_{\opt},\bar V_{\opt}).
\end{equation*}
Plugging these back into \cref{eq_stdcase1} yields 
$\fsp(S_{i^*},V\setminus S_{i^*})\leq O(\rho)\cdot\OPT$,
which shows that in the standard case,
there is a recorded cut that achieves $O(\rho)$ approximation.

\paragraph{Case 3: The ``exceptional'' case}
It remains to consider the case where during the first $i^*$ iterations of the main loop, the condition in step~\ref{stp:cond} is \emph{not met} exactly once (it cannot happen more than once because the $C_i$'s are disjoint).
Let $j\leq i^*$ be the iteration in which this happens, 
and then step~\ref{stp:recordT} is executed and $T=C_j$.
Observe that $\card{S}$ is not increased in this iteration, and thus $j<i^*$.

We now break the analysis into two subcases.
The first (and simpler) subcase is when $|T|<\frac16 |V_\opt|$;
we can then think of the algorithm as if it puts $T$ ``aside''
(in step~\ref{stp:recordT}) and then 
the execution proceeds similarly to the standard case until iteration $i^*$,
at which time the cut $(S_{i^*},V\setminus S_{i^*})$ is recorded 
with $V\setminus S_{i^*}$ being in effect the union $T\cup V'$. 
We can then repeat our analysis of $\fsp(S_{i^*},V\setminus S_{i^*})$ 
from the standard case, except that \cref{eq_induced_cut_dem} is replaced with
\[
  \minn{\card{U},\card{\bar U}}
  \geq \minn{\card{V_{\opt}},\card{\bar V_{\opt}}} - \card{T} - \card{S_{i^*-1}}
  > (1- \tfrac16 - \tfrac13)\card{V_{\opt}} 
  = \tfrac12\card{V_{\opt}}.
\]
This leads again to the bound 
$\fsp(S_{i^*},V\setminus S_{i^*}) \leq O(\rho)\cdot\OPT$,
except that now the hidden constant contains another small loss.

In the second and final subcase we assume that $|T|\geq\frac16|V_\opt|$,
and show that the cut $(T,\bar T)$ recorded in step~\ref{stp:recordT}
is good enough.
Indeed, $V$ is partitioned at the end of iteration $j$ into three subsets: $S_{j-1}$, $T$, and the remaining vertices $V'=V\setminus(S_{j-1}\cup T)$.
Hence, 
\begin{equation} \label{eq_expflow0}
  \fsp(T,V\setminus T) 
  = \frac{\fcap(T,V\setminus T)} {\card{T}}
  = \frac{\fcap(T, V')}{\card{T}} + \frac{\fcap(T, S_{j-1})} {\card{T}} .
\end{equation}
Observe that \cref{clm_stdcase_aprx} can be applied to all iterations up to $j$, 
because every earlier iteration added vertices to $S$ and not to $T$.
Applying this to iteration $j$, which produces the cut $(T,V')$,
we bound the first summand in \cref{eq_expflow0} by
\begin{equation}\label{eq_expflow1}
\frac{\fcap(T, V')}{|T|} \leq O(\rho)\cdot \OPT .
\end{equation}
For the second summand in \cref{eq_expflow0}, we bound
\begin{align*}
  \fcap(T, S_{j-1}) 
  \leq \fcap(S_{j-1},V\setminus S_{j-1}) 
  &\leq \fcap(S_0,\bar S_0) + \sum_{i=1}^{j-1}\fcap(C_i,\bar C_i).
\end{align*}
Proceed now similarly to the standard case; 
recall that $\fcap(S_0,\bar S_0)\leq\fcap(V_{\opt},\bar V_{\opt})$,
and use \cref{clm_stdcase_aprx} to obtain
\[
  \sum_{i=1}^{j-1}\fcap(C_i,\bar C_i)
  \leq \sum_{i=0}^{j-1} \left( \frac23\rho\cdot\OPT\cdot|C_i| \right)
  < \frac23\rho\cdot\OPT\cdot |S_{j-1}|
  < \frac29\rho\cdot\OPT\cdot |V_\opt|.
\]
Gathering the above inequalities, we obtain
\begin{equation}\label{eq_expflow2}
  \frac{\fcap(T, S_{j-1})}{|T|} 
  \leq \frac{\fcap(V_{\opt},\bar V_{\opt}) + \frac29\rho\cdot\OPT\cdot |V_\opt| } {|V_\opt|/6} 
  \leq O(\rho)\cdot \OPT.
\end{equation}
Plugging \cref{eq_expflow1,eq_expflow2} into \cref{eq_expflow0},
we have $\fsp(T,\bar T) \leq O(\rho)\cdot \OPT$,
which shows that in this final subcase,
the cut recorded in iteration $j$ achieves $O(\rho)$ approximation.
This completes the proof of \cref{thm_uniform}.

\section{Concluding remarks}

The discrete Cheeger inequality \cite{AM85, JS88, Mihail89} can be used to approximate the conductance of a graph $G$ based on an eigenvector computation. 
Specifically, letting $\hat L_G$ denote the normalized Laplacian of $G$, 
the eigenvector associated with the second-smallest eigenvalue 
of $\hat L_G$ is the minimizer of
\begin{equation}\label{eq:cheeger}
  \min \left\{ 
    \frac{v^T \hat L_Gv}{v^Tv}:\ 
    v\neq 0,\ v \perp \mathbf1 
  \right\},
\end{equation}
where $\mathbf1$ is the all-ones vector.
The solution $v$ can be ``rounded'' into a cut of near-optimal conductance
by using a simple sweep-line procedure on the entries of $v$,
see \cite{Chung97,Spielman12} for recent presentations.
Moreover, this computation can be carried out (within reasonable accuracy) 
in near-linear time, which makes it useful in practical settings.

It is natural ask whether this approach extends to the $st$-separating setting. 
The optimization problem analogous to \cref{eq:cheeger} would have 
an additional constraint to ensure $st$-separation,
\begin{equation}\label{eq:st-cheeger}
  \min \left\{ 
    \frac{v^T \hat L_Gv}{v^Tv}:\ 
    v\neq 0,\ v \perp \mathbf1,\ \forall {i\in V},\ v_s\leq v_i\leq v_t 
  \right\}.
\end{equation}
It is not difficult to verify a solution $v$ to \cref{eq:st-cheeger} 
can be ``rounded'' to a cut achieving a Cheeger-type approximation for the $st$-conductance of $G$.
However, we currently do not know whether \cref{eq:st-cheeger} can be solved, or even approximated within constant factor, in polynomial time.

\subsubsection*{Acknowledgments}

We thank Alexandr Andoni, Aleksander Madry and Luca Trevisan 
for useful discussions at various stages of the research.
We are especially grateful to Alexandr Andoni for his permission to include the material presented in \cref{thm_uniform}.

{\small
\bibliographystyle{alphaurlinit}
\bibliography{stsparsbib,robi,stspars_application}

\newcommand{\etalchar}[1]{$^{#1}$}
\begin{thebibliography}{AGG{\etalchar{+}}13}

\bibitem[AGG{\etalchar{+}}13]{AGGNT13}
I.~Abraham, C.~Gavoille, A.~Gupta, O.~Neiman, and K.~Talwar.
\newblock Cops, robbers, and threatening skeletons: Padded decomposition for
  minor-free graphs.
\newblock {\em CoRR}, abs/1311.3048, 2013.
\newblock \href {http://arxiv.org/abs/1311.3048} {\path{arXiv:1311.3048}}.

\bibitem[ALN08]{ALN08}
S.~Arora, J.~R. Lee, and A.~Naor.
\newblock Euclidean distortion and the sparsest cut.
\newblock {\em J. Amer. Math. Soc.}, 21(1):1--21, 2008.

\bibitem[AM85]{AM85}
N.~Alon and V.~D. Milman.
\newblock $\lambda_1$, isoperimetric inequalities for graphs, and
  superconcentrators.
\newblock {\em Journal of Combinatorial Theory, Series B}, 38(1):73 -- 88,
  1985.
\newblock \href {http://dx.doi.org/10.1016/0095-8956(85)90092-9}
  {\path{doi:10.1016/0095-8956(85)90092-9}}.

\bibitem[AMS11]{AMS11}
C.~Amb\"{u}hl, M.~Mastrolilli, and O.~Svensson.
\newblock Inapproximability results for maximum edge biclique, minimum linear
  arrangement, and sparsest cut.
\newblock {\em SIAM J. Comput.}, 40(2):567--596, April 2011.
\newblock \href {http://dx.doi.org/10.1137/080729256}
  {\path{doi:10.1137/080729256}}.

\bibitem[AR98]{AR98}
Y.~Aumann and Y.~Rabani.
\newblock An ${O}(\log k)$ approximate min-cut max-flow theorem and
  approximation algorithm.
\newblock {\em SIAM J. Comput.}, 27(1):291--301, 1998.

\bibitem[ARV09]{ARV09}
S.~Arora, S.~Rao, and U.~Vazirani.
\newblock Expander flows, geometric embeddings and graph partitioning.
\newblock {\em J. ACM}, 56(2):1--37, 2009.
\newblock \href {http://dx.doi.org/10.1145/1502793.1502794}
  {\path{doi:10.1145/1502793.1502794}}.

\bibitem[BJ01]{BoykovJolly01}
Y.~Y. Boykov and M.-P. Jolly.
\newblock Interactive graph cuts for optimal boundary \& region segmentation of
  objects in {N-D} images.
\newblock In {\em Proceedings of the IEEE International Conference on Computer
  Vision (ICCV)}, volume~1, pages 105--112. IEEE, 2001.
\newblock \href {http://dx.doi.org/10.1109/ICCV.2001.937505}
  {\path{doi:10.1109/ICCV.2001.937505}}.

\bibitem[Bou85]{Bourgain85}
J.~Bourgain.
\newblock On {L}ipschitz embedding of finite metric spaces in {H}ilbert space.
\newblock {\em Israel J. Math.}, 52(1-2):46--52, 1985.
\newblock \href {http://dx.doi.org/10.1007/BF02776078}
  {\path{doi:10.1007/BF02776078}}.

\bibitem[CC15]{ChewCahill15}
S.~E. Chew and N.~D. Cahill.
\newblock Semi-supervised normalized cuts for image segmentation.
\newblock In {\em Proceedings of the IEEE International Conference on Computer
  Vision (ICCV)}, pages 1716--1723, 2015.
\newblock \href {http://dx.doi.org/10.1109/ICCV.2015.200}
  {\path{doi:10.1109/ICCV.2015.200}}.

\bibitem[Chu97]{Chung97}
F.~R.~K. Chung.
\newblock {\em Spectral graph theory}, volume~92 of {\em CBMS Regional
  Conference Series in Mathematics}.
\newblock Published for the Conference Board of the Mathematical Sciences,
  Washington, DC, 1997.

\bibitem[CK09]{ChuzhoyK09}
J.~Chuzhoy and S.~Khanna.
\newblock Polynomial flow-cut gaps and hardness of directed cut problems.
\newblock {\em J. ACM}, 56(2), 2009.

\bibitem[CKK{\etalchar{+}}06]{CKKRS06}
S.~Chawla, R.~Krauthgamer, R.~Kumar, Y.~Rabani, and D.~Sivakumar.
\newblock On the hardness of approximating multicut and sparsest-cut.
\newblock {\em Computational Complexity}, 15(2):94--114, 2006.

\bibitem[CKR10]{CKR10}
E.~Chlamtac, R.~Krauthgamer, and P.~Raghavendra.
\newblock Approximating sparsest cut in graphs of bounded treewidth.
\newblock In {\em 13th International Workshop on Approximation, Randomization,
  and Combinatorial Optimization}, volume 6302 of {\em Lecture Notes in
  Computer Science}, pages 124--137. Springer, 2010.
\newblock \href {http://dx.doi.org/10.1007/978-3-642-15369-3_10}
  {\path{doi:10.1007/978-3-642-15369-3_10}}.

\bibitem[EOK11]{ErikssonOK11}
A.~Eriksson, C.~Olsson, and F.~Kahl.
\newblock Normalized cuts revisited: A reformulation for segmentation with
  linear grouping constraints.
\newblock {\em Journal of Mathematical Imaging and Vision}, 39(1):45--61, 2011.
\newblock \href {http://dx.doi.org/10.1007/s10851-010-0223-5}
  {\path{doi:10.1007/s10851-010-0223-5}}.

\bibitem[FT03]{FT03}
J.~Fakcharoenphol and K.~Talwar.
\newblock Improved decompositions of graphs with forbidden minors.
\newblock In {\em 6th International workshop on Approximation algorithms for
  combinatorial optimization}, pages 36--46, 2003.

\bibitem[GTW13]{GTW13}
A.~Gupta, K.~Talwar, and D.~Witmer.
\newblock Sparsest cut on bounded treewidth graphs: Algorithms and hardness
  results.
\newblock In {\em 45th Annual ACM Symposium on Symposium on Theory of
  Computing}, pages 281--290. ACM, 2013.
\newblock \href {http://dx.doi.org/10.1145/2488608.2488644}
  {\path{doi:10.1145/2488608.2488644}}.

\bibitem[JS88]{JS88}
M.~Jerrum and A.~Sinclair.
\newblock Conductance and the rapid mixing property for markov chains: the
  approximation of permanent resolved.
\newblock In {\em 20th Annual ACM Symposium on Theory of Computing}, pages
  235--244. ACM, 1988.
\newblock \href {http://dx.doi.org/10.1145/62212.62234}
  {\path{doi:10.1145/62212.62234}}.

\bibitem[KPR93]{KPR93}
P.~Klein, S.~A. Plotkin, and S.~Rao.
\newblock Excluded minors, network decomposition, and multicommodity flow.
\newblock In {\em 25th Annual ACM Symposium on Theory of Computing}, pages
  682--690, May 1993.

\bibitem[KV05]{KV05}
S.~Khot and N.~K. Vishnoi.
\newblock The unique games conjecture, integrality gap for cut problems and the
  embeddability of negative type metrics into $\ell_1$.
\newblock In {\em 46th IEEE Annual Symposium on Foundations of Computer
  Science}, pages 53--62, 2005.

\bibitem[LLR95]{LLR95}
N.~Linial, E.~London, and Y.~Rabinovich.
\newblock The geometry of graphs and some of its algorithmic applications.
\newblock {\em Combinatorica}, 15(2):215--245, 1995.
\newblock \href {http://dx.doi.org/10.1007/BF01200757}
  {\path{doi:10.1007/BF01200757}}.

\bibitem[LR99]{LR99}
T.~Leighton and S.~Rao.
\newblock Multicommodity max-flow min-cut theorems and their use in designing
  approximation algorithms.
\newblock {\em J. ACM}, 46(6):787--832, 1999.
\newblock \href {http://dx.doi.org/10.1145/331524.331526}
  {\path{doi:10.1145/331524.331526}}.

\bibitem[LS13]{LS13}
J.~R. Lee and A.~Sidiropoulos.
\newblock Pathwidth, trees, and random embeddings.
\newblock {\em Combinatorica}, 33(3):349--374, 2013.
\newblock \href {http://dx.doi.org/10.1007/s00493-013-2685-8}
  {\path{doi:10.1007/s00493-013-2685-8}}.

\bibitem[Mih89]{Mihail89}
M.~Mihail.
\newblock Conductance and convergence of markov chains-a combinatorial
  treatment of expanders.
\newblock In {\em 30th Annual Symposium on Foundations of Computer Science},
  pages 526--531. IEEE Computer Society, 1989.
\newblock \href {http://dx.doi.org/10.1109/SFCS.1989.63529}
  {\path{doi:10.1109/SFCS.1989.63529}}.

\bibitem[MS90]{MatulaS90}
D.~W. Matula and F.~Shahrokhi.
\newblock Sparsest cuts and bottlenecks in graphs.
\newblock {\em Discrete Applied Mathematics}, 27(1-2):113--123, 1990.
\newblock \href {http://dx.doi.org/10.1016/0166-218X(90)90133-W}
  {\path{doi:10.1016/0166-218X(90)90133-W}}.

\bibitem[MVM11]{MajiVM11}
S.~Maji, N.~K. Vishnoi, and J.~Malik.
\newblock Biased normalized cuts.
\newblock In {\em Proceedings of the IEEE Conference on Computer Vision and
  Pattern Recognition (CVPR)}, pages 2057--2064. IEEE, 2011.
\newblock \href {http://dx.doi.org/10.1109/CVPR.2011.5995630}
  {\path{doi:10.1109/CVPR.2011.5995630}}.

\bibitem[Rao99]{Rao99}
S.~Rao.
\newblock Small distortion and volume preserving embeddings for planar and
  {E}uclidean metrics.
\newblock In {\em Proceedings of the 15th Annual Symposium on Computational
  Geometry}, pages 300--306. ACM, 1999.

\bibitem[RST12]{RST12}
P.~Raghavendra, D.~Steurer, and M.~Tulsiani.
\newblock Reductions between expansion problems.
\newblock In {\em IEEE Conference on Computational Complexity}, pages 64--73,
  2012.

\bibitem[SM00]{ShiMalik00}
J.~Shi and J.~Malik.
\newblock Normalized cuts and image segmentation.
\newblock {\em IEEE Transactions on Pattern Analysis and Machine Intelligence},
  22(8):888--905, 2000.
\newblock \href {http://dx.doi.org/10.1109/34.868688}
  {\path{doi:10.1109/34.868688}}.

\bibitem[Spi12]{Spielman12}
D.~A. Spielman.
\newblock Lecture notes on spectal graph theory, lecture 6, 2012.
\newblock Available from:
  \url{http://www.cs.yale.edu/homes/spielman/561/lect06-12.pdf}.

\bibitem[Tre13]{Trevisan13}
L.~Trevisan.
\newblock Is {C}heeger-type approximation possible for nonuniform sparsest cut?
\newblock {\em CoRR}, abs/1303.2730, 2013.
\newblock \href {http://arxiv.org/abs/1303.2730} {\path{arXiv:1303.2730}}.

\bibitem[WL93]{WuLeahy93}
Z.~Wu and R.~Leahy.
\newblock An optimal graph theoretic approach to data clustering: Theory and
  its application to image segmentation.
\newblock {\em IEEE Transactions on Pattern Analysis and Machine Intelligence},
  15(11):1101--1113, 1993.
\newblock \href {http://dx.doi.org/10.1109/34.244673}
  {\path{doi:10.1109/34.244673}}.

\bibitem[YS04]{YuShi2004}
S.~X. Yu and J.~Shi.
\newblock Segmentation given partial grouping constraints.
\newblock {\em IEEE Transactions on Pattern Analysis and Machine Intelligence},
  26(2):173--183, 2004.
\newblock \href {http://dx.doi.org/10.1109/TPAMI.2004.1262179}
  {\path{doi:10.1109/TPAMI.2004.1262179}}.

\end{thebibliography}
}

\appendix

\section{Deferred Proofs from \Cref{sec:basic}}
\label{sec:basic_proofs}

\begin{proof}[Proof of \cref{lmm_l1}]
First suppose $m=1$. Denote $f^{\min}=\min_{v\in V}f(v)$ and $f^{\max}=\max_{v\in V}f(v)$. Sample a threshold $\tau\in(f^{\min},f^{\max})$ uniformly at random, and let $S_\tau=\{v\in V:f(v)\leq\tau\}$. Note that $S_\tau\neq\emptyset,V$. Let $\chi_\tau$ denote the characteristic function of $S_\tau$. For every $u,v\in V$ we have
\[ \EX_\tau\left|\chi_\tau(u)-\chi_\tau(v)\right| = \tfrac{1}{f^{\max}-f^{\min}}\left|f(u)-f(v)\right| , \]
and hence
\[ \frac{\EX_\tau\left[ \sum_{u,v\in V}\fcap(u,v)\left|\chi_\tau(u)-\chi_\tau(v)\right| \right]}{\EX_\tau\left[ \sum_{u,v\in V}\fdem(u,v)\left|\chi_\tau(u)-\chi_\tau(v)\right| \right]} 
= \frac{\sum_{u,v\in V}\fcap(u,v)\left|f(u)-f(v)\right|}{\sum_{u,v\in V}\fdem(u,v)\left|f(u)-f(v)\right|}.
\]
Consequently, there is a choice of $\tau$ for which
\[ \frac{\sum_{u,v\in V}\fcap(u,v)\left|\chi_\tau(u)-\chi_\tau(v)\right|}{\sum_{u,v\in V}\fdem(u,v)\left|\chi_\tau(u)-\chi_\tau(v)\right|} 
\leq \frac{\sum_{u,v\in V}\fcap(u,v)\left|f(u)-f(v)\right|}{\sum_{u,v\in V}\fdem(u,v)\left|f(u)-f(v)\right|}.
\]
The left-hand side is $sp_G(S_\tau,\bar S_\tau)$, so it is a cut as needed. Observe that $f$ induces an ordering of the vertices, $f(v_1)\leq f(v_2)\leq\ldots\leq f(v_n)$, and $S_\tau$ is a prefix of the vertices by that ordering. Hence, it can be found efficiently by enumerating over all prefixes, 
as there are less than $n$ of them. 
Finally, if $f$ is $st$-sandwiching then $f(s)=f^{\min}$ and $f(t)=f^{\max}$, which necessarily implies $s\in S_\tau$ and $t\in\bar S_\tau$, and $(S_\tau,\bar S_\tau)$ is an $st$-separating cut.

This proves the lemma for the $m=1$ case. To remove this assumption,  denote $f=(f_1,\ldots,f_m)$ and observe that
\begin{align*}
  \frac{\sum_{u,v\in V}\fcap(u,v)\norm{f(u)-f(v)}_1}{\sum_{u,v\in V}\fdem(u,v)\norm{f(u)-f(v)}_1}
  & =
  \frac{\sum_{k=1}^m\left(\sum_{u,v\in V}\fcap(u,v)\left|f_k(u)-f_k(v)\right|\right)}{\sum_{k=1}^m\left(\sum_{u,v\in V}\fdem(u,v)\left|f_k(u)-f_k(v)\right|\right)} \\
  &\geq \min_{k=1,\ldots,m}\frac{\sum_{u,v\in V}\fcap(u,v)\left|f_k(u)-f_k(v)\right|}{\sum_{u,v\in V}\fdem(u,v)\left|f_k(u)-f_k(v)\right|},
\end{align*}
so we can find an optimal coordinate $f_k$ of $f$ (one achieving the minimum)
and apply to it the above argument for dimension $m=1$.
\end{proof}

\begin{proof}[Proof of \cref{prop:stsep}]
Fix $u,v\in V$.
By the triangle inequality, $d(u,v)\leq d(u,s)+d(s,v)$ 
and also $d(u,v)\leq d(u,t)+d(t,v)$. 
Sum these inequalities and apply the $st$-separation property, to get
\[ 
  2d(u,v)
  \leq \left[ d(u,s)+d(u,t)\right ]+\left[ d(s,v)+d(t,v) \right]
  = 2d(s,t). 
\qedhere
\]
\end{proof} 

\begin{proof}[Proof of \cref{prop_lip}]
Let $\sigma\in\{\pm 1\}$. By the triangle inequality, $d(u,v)\geq\left|d(u,s)-d(v,s)\right|$,
and similarly by \cref{lmm_triangle}, $d(u,v)\geq\left|d(u,A)-d(v,A)\right|$. 
Using these,
\begin{align*}
  \left|f_A^\sigma(u)-f_A^\sigma(v)\right| 
  &= \tfrac{1}{2}\left|\left[d(u,s)-d(v,s)\right]+\sigma\left[d(u,A)-d(v,A)\right]\right| \\
  &\leq \tfrac{1}{2}\left|d(u,s)-d(v,s)\right| + \tfrac{1}{2}\left|d(u,A)-d(v,A)\right| \\
  &\leq \tfrac{1}{2}d(u,v) + \tfrac{1}{2}d(u,v) = d(u,v) .
\qedhere
\end{align*}
\end{proof}

\begin{proof}[Proof of \cref{prop_goodsign}]
Denote $x=\frac{1}{2}\left[d(u,A)-d(v,A)\right]$ and $y=\frac{1}{2}\left[d(u,s)-d(v,s)\right]$. Then,
\[ 
  \norm{f_A^\pm(u)-f_A^\pm(v)}_1 
  = \left|f_A^+(u)-f_A^+(v)\right| + \left|f_A^-(u)-f_A^-(v)\right|
  = \left|y+x\right| +\left|y-x\right|
  \geq \left|x\right|, 
\]
as needed, where the inequality is since either $|y+x|\geq|x|$ or $|y-x|\geq|x|$, depending on whether $x,y$ have the same or opposite signs.
\end{proof}

\begin{proof}[Proof of \cref{prop_separation}]
For the $f^+_A$ coordinate,
\begin{itemize} \compactify
\item By \cref{lmm_triangle}, $f_A^+(s)=\frac{1}{2}d(s,A)\leq\frac{1}{2}\left(d(v,s)+d(v,A)\right)=f_A^+(v)$.
\item By \cref{lmm_triangle}, $d(v,A)\leq d(v,t)+d(t,A)$. By the $st$-separation, $d(v,t)=d(s,t)-d(v,s)$. Plugging and rearranging we get $d(v,s)+d(v,A)\leq d(s,t)+d(t,A)$, so $f_A^+(v)\leq f_A^+(t)$.
\end{itemize}
For the $f^-_A$ coordinate,
\begin{itemize} \compactify
\item By \cref{lmm_triangle}, $d(v,A)\leq d(v,s)+d(s,A)$, and hence $f_A^-(s)=-\frac{1}{2}d(s,A)\leq\frac{1}{2}\left(d(v,s)-d(v,A)\right)=f_A^-(v)$.
\item By the $st$-separation, $f_A^-(t)=\frac{1}{2}\left(d(s,t)-d(t,A)\right)=\frac{1}{2}\left(d(s,v)+d(v,t)-d(t,A)\right)$. By \cref{lmm_triangle}, $d(v,t)-d(t,A)\geq-d(v,A)$. Combining these yields $f_A^-(t)\geq\frac{1}{2}\left(d(s,v)-d(v,A)\right)=f_A^-(v)$.
\end{itemize}
\end{proof}

\end{document}